
\typeout{Offline Time-Independent Multi-Agent Path Planning}


\documentclass{article}
\pdfpagewidth=8.5in
\pdfpageheight=11in
\usepackage{ijcai22}

\usepackage{times}
\usepackage{soul}
\usepackage{url}
\usepackage[hidelinks]{hyperref}
\usepackage[utf8]{inputenc}
\usepackage[small]{caption}
\usepackage{graphicx}
\usepackage{amsmath}
\usepackage{amsthm}
\usepackage{amssymb}
\usepackage{booktabs}
\usepackage{algorithm}
\usepackage{algpseudocode}
\usepackage{multirow}
\usepackage[export]{adjustbox}
\urlstyle{same}
\usepackage[binary-units]{siunitx}

\usepackage{tikz}
\usetikzlibrary{calc}
\usetikzlibrary{positioning}
\usetikzlibrary{shapes.geometric}
\usetikzlibrary{patterns}
\tikzset{
vertex/.style={circle,draw,black,align=center,inner sep=0cm, minimum size=0.5cm,fill=white,anchor=center},
agent/.style={vertex,fill={rgb:red,1;green,1;blue,1},text=white},
agent1/.style={agent,fill={rgb:red,0;green,1;blue,3}},
agent2/.style={agent,fill={rgb:red,3;green,1;blue,0}},
agent3/.style={agent,fill={rgb:red,1;green,3;blue,2}},
line/.style={black}
}
\usepackage[font=small,subrefformat=parens]{subcaption}
\usepackage{macro}


\newtheorem{theorem}{Theorem}[section]
\newtheorem{lemma}[theorem]{Lemma}
\newtheorem{proposition}[theorem]{Proposition}

\newtheorem{definition}[theorem]{Definition}
\newenvironment{sketch}{\proof}{\endproof}
\newtheorem*{theorem*}{Theorem}
\newtheorem*{lemma*}{Lemma}
\newtheorem*{proposition*}{Proposition}

\usepackage{multibib}
\newcites{Appendix}{Additional References}




\pdfinfo{
/TemplateVersion (IJCAI.2022.0)
}

\title{Offline Time-Independent Multi-Agent Path Planning}

\author{
  Keisuke Okmura\footnote{Contact Author}\and
  Fran\c{c}ois Bonnet\and
  Yasumasa Tamura\And
  Xavier D\'{e}fago\\
  \affiliations
  Tokyo Institute of Technology\\
  \emails
  \{okumura.k, bonnet.f, tamura.y, defago.x\}@coord.c.titech.ac.jp
}

\begin{document}

\maketitle
\begin{abstract}
  This paper studies a novel planning problem for multiple agents that cannot share holding resources, named \emph{OTIMAPP (Offline Time-Independent Multi-Agent Path Planning)}.
  Given a graph and a set of start-goal pairs, the problem consists in assigning a path to each agent such that every agent eventually reaches their goal without blocking each other, regardless of how the agents are being scheduled at runtime.
  The motivation stems from the nature of distributed environments that agents take actions fully asynchronous and have no knowledge about those exact timings of other actors.
  We present solution conditions, computational complexity, solvers, and robotic applications.
\end{abstract}
\section{Introduction}
The eventual goal of collective path planning for multiple agents is to make each agent in a shared workspace be on their respective goal status.
This problem becomes non-trivial when agents cannot pass through each other, i.e., each agent occupies some resources in the space while the others are blocked to access these resources at that time.
We see such situations in fleet operations of warehouses~\cite{wurman2008coordinating}, intersection management for self-driving cars~\cite{dresner2008multiagent}, multi-robot 3D printing systems~\cite{zhang2018large}, packet-switched networks with limited buffer spaces~\cite{tel2000introduction}, and lock operations of transactions on distributed databases~\cite{knapp1987deadlock}, to name just a few.

In such multi-agent systems, each agent inherently takes and finishes actions (or moves) \emph{at their own timings independently and unpredictably from other actors}, regardless of centralized or decentralized controls.
This is due to the nature of \emph{distributed environments} such as message delay or clock shift/drift, as well as uncaptured individual differences between agents like frictions of physical robots.
Nevertheless, the cutting‐edge research on pathfinding for multiple agents, known as Multi-Agent Path Finding (MAPF)~\cite{stern2019def} that aims at finding a set of collision-free paths on graphs, heavily rely on timing assumptions.
Typical MAPF assumes that agents take actions just at the same time.
Not to mention, such ``timed'' schedules contradict the nature of distributed environments.
Even worse, on-time execution of offline planning is too optimistic with more agents.

One counter approach to the timing uncertainties is runtime supports by online monitoring, re-planning, and intervention, e.g.,~\cite{van2011reciprocal,ma2017multi,atzmon2020robust,okumura2021time}.
This approach however requires runtime effort and additional infrastructures (e.g., steady network and monitoring systems) to manage agents' status in real-time.
Moreover, how to realize such schemes in large systems is not trivial at all.

Instead, this paper studies a novel planning problem in which agents spontaneously take actions without any timing assumptions. The problem requests a set of paths (i.e., solution) ensuring that all agents eventually reach their destinations without blocking each other permanently.
To see this, consider the situation in Fig.~\ref{fig:example}(left).
This plan runs a risk of execution failure;
if the agent $j$ gets delayed for any reason while the agent $i$ moves two steps to the right, then each agent blocks each other and neither agent can progress on its respective path.
In contrast, in Fig.~\ref{fig:example}(right), regardless of how the two agents are scheduled, both agents eventually reach their destinations unless they permanently stop the progression.
We call the corresponding problem \emph{Offline Time-Independent Multi-Agent Path Planning (OTIMAPP)}.

{
  \newcommand{\edgesize}{0.2cm}
  \newcommand{\scaleratio}{1.0}
  \newcommand{\plotbody}{
    \node[agent1](v1) at (0.0, 0.0) {$i$};
    \node[vertex, right=\edgesize of v1](v2) {};
    \node[vertex, right=\edgesize of v2](v3) {};
    \draw[line,->,color={rgb:red,0;green,1;blue,3},very thick] (0.22, 0.15) -- (2.6, 0.15);
    \node[agent2, right=\edgesize of v3](v4) {$j$};
    \node[vertex, right=\edgesize of v4](v5) {};
    \node[vertex](v6) at (1.1, -0.7) {};
    \node[vertex, right=\edgesize of v6](v7) {};
    \foreach \u / \v in {v1/v2,v2/v3,v3/v4,v4/v5,v7/v4,v3/v6,v3/v7}
    \draw[line] (\u) -- (\v);
  }
  \begin{figure}[tb!]
    \centering
    \begin{tabular}{cc}
    \scalebox{\scaleratio}{
      \begin{minipage}{0.45\linewidth}
        \centering
        \begin{tikzpicture}
          \plotbody
          \draw[line,->,color={rgb:red,3;green,1;blue,0},very thick]
          (2.0, 0.05) -- (1.4, 0.05) -- (1.0, -0.6);
        \end{tikzpicture}
      \end{minipage}
      }
      &
      \scalebox{\scaleratio}{
        \begin{minipage}{0.45\linewidth}
          \centering
          \begin{tikzpicture}
            \plotbody
            \draw[line,->,color={rgb:red,3;green,1;blue,0},very thick]
            (2.1, -0.25) -- (1.9, -0.65) -- (1.5, -0.1) -- (1.15, -0.7);
          \end{tikzpicture}
        \end{minipage}
        }
    \end{tabular}
    \caption{
      Example of OTIMAPP.
      A graph is depicted with black lines.
      Two agents ($i$, $j$) and their paths are colored.
      \textit{left}: Both agents stop progression permanently due to mutual exclusion (i.e., no collision) if $i$ moved two steps before $j$ moves.
      \textit{right}: As long as each agent follows a respective path, both agents eventually reach their last vertex;
      these paths constitute an OTIMAPP solution.
    }
    \label{fig:example}
  \end{figure}
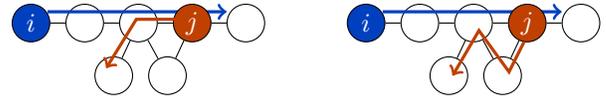
}

The contribution of this paper is to establish the foundation of OTIMAPP for both theory and practice.
Specifically, the topics are categorized into two:

\medskip
\noindent
$\blacktriangleright$~\emph{We formalize and analyze OTIMAPP.}
Section~\ref{sec:solution-analysis} identifies a necessary and sufficient condition for a \emph{solution}, i.e., a set of paths that makes all agents reach their goals without timing assumptions.
This is based on characterization of \emph{deadlocks}.
Section~\ref{sec:complexity} conducts a series of complexity analyses and reveals that
(1)~finding a solution is NP-hard on directed graphs,
(2)~finding a solution is NP-hard on undirected graphs when solutions are restricted to simple paths, and
(3)~verifying a solution is co-NP-complete.

\medskip
\noindent
$\blacktriangleright$~\emph{We present algorithms to solve OTIMAPP and demonstrate the utility of OTIMAPP via robotic applications.}
Section~\ref{sec:solvers} presents two approaches to derive solutions: prioritized planning (PP) and deadlock-based search (DBS).
Both algorithms are respectively derivative from basic MAPF algorithms~\cite{erdmann1987multiple,sharon2015conflict} and rely on a newly developed procedure to detect deadlocks within a set of paths.
Section~\ref{sec:eval} shows that either PP or DBS can compute large OTIMAPP instances to some extent.
Furthermore, we show that solutions keep robots' moves efficient in an adverse environment for timing assumptions compared to existing approaches with runtime supports~\cite{ma2017multi,okumura2021time}.
Moreover, we demonstrate that solutions are executable with physical robots in both a centralized style and a decentralized style with only local interactions, without cumbersome procedures of online interventions.

\medskip
In the remainder, all omitted proofs including sketches are available in the appendix.
The appendix, code, and movie are available on \url{https://kei18.github.io/otimapp}.
Related work will be discussed at the end.

\section{Problem Definition}
An \emph{OTIMAPP instance} is given by a graph $G = (V, E)$, a set of agents $A=\{1, 2, \ldots, N\}$, an injective initial state function $s:A \mapsto V$, and an injective goal state function $g:A \mapsto V$.
An OTIMAPP instance on digraphs is similar to the undirected case.

An \emph{execution schedule} is an infinite sequence of agents.
An \emph{OTIMAPP execution} is defined by an OTIMAPP instance, an execution schedule $\mathcal{E}$, and a set of paths $\{\path{1}, \ldots, \path{N}\}$ as follows.
The agents are \emph{activated} in turn according to $\mathcal{E}$.
Upon activation and until reaching the end of its path $\path{i}$, an agent $i$ takes a single step along $\path{i}$ if the vertex is vacant or stays at its current location otherwise.
After reaching the end of the path, the agent only stays.
$\mathcal{E}$ is called \emph{fair} when every agent appears infinitely-many times in $\mathcal{E}$.

An \emph{OTIMAPP problem} is to decide whether there is a set of paths $\{\path{1}, \ldots, \path{N}\}$ such that
(1)~each path for an agent $i$ begins from $s(i)$ and ends at $g(i)$,
(2)~for any fair execution schedule, all agents reach the end of their paths (i.e., goals) in a finite number of activations.
A \emph{solution} is a set of paths satisfying these two.

\paragraph{Other Notations}
Let $s_i$ and $g_i$ denote $s(i)$ and $g(i)$, respectively.
A location for an agent $i$ is associated with a \emph{progress index} $\clock_i \in \{1, \cdots, |\path{i}|\}$ and represented as $\loc{i}{\clock_i}$, where $\loc{i}{j}$ is the $j$-th vertex in \path{i}.
Every progress index starts at one and is incremented each time the agent moves a step along its path.
The progress index is non-decreasing and no longer increases after reaching the end of the path.
We use $S[-1]$ to denote the last element of the sequence $S$.

\paragraph{Rationale and remarks}
Any solution must deal with all timing uncertainties because execution schedules are unknown when offline planning.
We assume that agents are activated sequentially and that each activation is atomic.
However, there is no loss of generality as long as an agent can atomically reserve its destination before each move.
Indeed, several robots acted simultaneously in our demos.
Throughout the paper, we assume that each path \path{i} starts from $s_i$ and ends at $g_i$ to focus on analyses related to schedules.

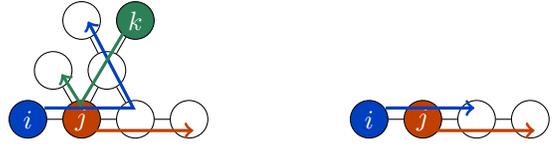
\begin{figure}[tb!]
  \centering
  \newcommand{\colwidth}{0.48\hsize}
  \newcommand{\edgesize}{0.2cm}
  \begin{tabular}{cc}
    \begin{minipage}[t]{\colwidth}
      \centering
      \begin{tikzpicture}
        \node[agent1](v1) at (0.0, 0.0) {$i$};
        \node[agent2,right=\edgesize of v1](v2) {$j$};
        \node[vertex,right=\edgesize of v2](v3) {};
        \node[vertex,right=\edgesize of v3](v4) {};
        \node[vertex](v5) at (1.05,0.65) {};
        \node[vertex,above=0.8cm of v2](v6) {};
        \node[agent3,above=0.8cm of v3](v7) {$k$};
        \node[vertex,left=\edgesize of v5](v8) {};
        \foreach \u / \v in {v1/v2,v2/v3,v3/v4,v2/v5,v3/v5,v5/v6,v5/v7,v2/v8}
        \draw[line] (\u) -- (\v);
        \draw[line,->,color={rgb:red,0;green,1;blue,3},very thick]
        (0.22, 0.15) -- (1.4, 0.15) -- (0.8, 1.3);
        \draw[line,->,color={rgb:red,3;green,1;blue,0},very thick] (0.9, -0.15) -- (2.2, -0.15);
        \draw[line,->,color={rgb:red,1;green,3;blue,2},very thick]
        (1.3, 1.2) -- (0.7, 0.2) -- (0.45, 0.6);
      \end{tikzpicture}
    \end{minipage}
    &
      \begin{minipage}[t]{\colwidth}
        \centering
        \begin{tikzpicture}
          \node[agent1](v1) at (0.0, 0.0) {$i$};
          \node[agent2,right=\edgesize of v1](v2) {$j$};
          \node[vertex,right=\edgesize of v2](v3) {};
          \node[vertex,right=\edgesize of v3](v4) {};
          \foreach \u / \v in {v1/v2,v2/v3,v3/v4}
          \draw[line] (\u) -- (\v);
          %
          \draw[line,->,color={rgb:red,0;green,1;blue,3},very thick] (0.22, 0.15) -- (1.4, 0.15);
          \draw[line,->,color={rgb:red,3;green,1;blue,0},very thick] (0.9, -0.15) -- (2.2, -0.15);
        \end{tikzpicture}
      \end{minipage}
  \end{tabular}
  \caption{
    Examples of unreachable potential deadlocks.
    \emph{left}: cyclic; $((i, j, k), (3, 1, 2))$.
    \emph{right}: terminal; $(i, j, 2)$.
  }
  \label{fig:counterexamples}
\end{figure}

\section{Solution Analysis}
\label{sec:solution-analysis}
Given a set of paths, our first question is to determine whether it is a solution.
This section derives a necessary and sufficient condition for solutions.
For this purpose, we introduce four types of \emph{deadlocks}, categorized as; \emph{cyclic} or \emph{terminal}; \emph{potential} or \emph{reachable}.
Informally, a cyclic deadlock is a situation where agent $i$ wants to move to the current vertex of $j$, who wants to move to the current vertex of $k$, who wants to move to ... of $i$.
A terminal deadlock is a situation where agent $i$ reaches its destination and blocks the progress of another agent $j$.
A potential deadlock is called reachable when there exists an execution schedule leading to the deadlock.

\begin{definition}[potential cyclic deadlock]
  Given an OTIMAPP instance and a set of paths $\{ \path{1}, \ldots \path{N} \}$, a \emph{potential cyclic deadlock} is a pair of tuples $\left((i, j, k, \ldots, l), (t_i, t_j, t_k, \ldots, t_l)\right)$ such that $\loc{i}{t_i+1} = \loc{j}{t_j} \land \loc{j}{t_j+1} = \loc{k}{t_k} \land \ldots \land \loc{l}{t_l+1} = \loc{i}{t_i}$.
  The elements of the first tuple are without duplicates.
  \label{def:potential-cycle-deadlock}
\end{definition}

\begin{definition}[potential terminal deadlock]
  Given an OTIMAPP instance and a set of paths $\{ \path{1}, \ldots \path{N} \}$, a \emph{potential terminal deadlock} is a tuple $(i, j, t_j)$ such that $\path{i}\left[-1\right] = \path{j}[t_j]$ and $i \neq j$.
  \label{def:potential-terminal-deadlock}
\end{definition}

\begin{definition}[reachable cyclic deadlock]
  A potential cyclic deadlock $\left((i, j, \ldots, l), (t_i, t_j, \ldots, t_l)\right)$ is \emph{reachable} when there is an execution schedule leading to a situation where $\clock_i = t_i \land \clock_j = t_j \land \ldots \land \clock_l = t_l$.
  This deadlock is called a \emph{reachable cyclic deadlock}.
  \label{def:reachable-cycle-deadlock}
\end{definition}

\begin{definition}[reachable terminal deadlock]
  A potential terminal deadlock $\left(i, j, t_j\right)$ is \emph{reachable} when there is an execution schedule leading to a situation where $\clock_i = |\path{i}| \land \clock_j = t_j - 1$.
  This deadlock is called a \emph{reachable terminal deadlock}.
  \label{def:reachable-terminal-deadlock}
\end{definition}

We refer to both reachable (or potential) cyclic/terminal deadlocks by reachable (resp. potential) deadlocks and simply use ``deadlock'' whenever the context is obvious.
At least one execution schedule is required to verify whether a potential deadlock is reachable.
For instance, in Fig.~\ref{fig:example} (left), a schedule $(i, i, \ldots )$ is evidence.
A potential deadlock is not always reachable as illustrated in Fig.~\ref{fig:counterexamples}.

\begin{theorem}[necessary and sufficient condition]
  Given an OTIMAPP instance, a set of path $\{ \path{1}, \ldots, \path{N} \}$ is a solution if and only if there are (1)~no reachable terminal deadlocks and (2)~no reachable cyclic deadlocks.
  \label{thrm:necessary-sufficient}
\end{theorem}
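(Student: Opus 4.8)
The plan is to prove both directions by contraposition, exploiting the fact that a reachable deadlock is exactly a certificate of non-termination. Throughout I fix a solution candidate $\{\path{1}, \ldots, \path{n}\}$ that already satisfies the boundary conditions $\path{i}[0] = s_i$ and $\path{i}[-1] = g_i$, so that only the termination property is in question.

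For the necessity direction (feasibility $\Rightarrow$ no reachable deadlocks) I would argue the contrapositive, turning each kind of reachable deadlock into a fair execution in which some agent never reaches its goal. Given a reachable terminal deadlock $(i,j,t_j)$, I take the execution reaching $a_i.\clock = |\path{i}|-1$ and $a_j.\clock = t_j-1$; since an agent at its last vertex never moves, $a_i$ occupies $\path{i}[-1] = \loc{j}{t_j}$ forever, so $a_j$ can never advance past $t_j-1$. Injectivity of $g$ forces $t_j < |\path{j}|-1$, so $a_j$ is strictly short of its goal, and extending to any fair execution keeps it stuck. For a reachable cyclic deadlock the reached configuration places each cycle agent on a vertex the previous one requires; because the dependency is circular, no agent can be the first to move, so all remain frozen, and each index $t+1$ being valid guarantees they sit strictly before their goals. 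Extending to a fair execution witnesses infeasibility.

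The sufficiency direction is the main work and the hard part, and here too I would argue the contrapositive: assume infeasibility and manufacture a reachable deadlock. Fix a fair execution in which some agent never reaches its goal. Each clock is non-decreasing and bounded by $|\path{i}|-1$, hence stabilizes after finitely many activations; taking the maximum over the finitely many agents yields a time $T$ after which the entire configuration is static. Let $B$ be the nonempty set of agents whose limiting clock $t_i^{*}$ satisfies $t_i^{*} < |\path{i}|-1$. For each $a_i \in B$, fairness guarantees $a_i$ is activated after $T$ yet does not move, so $\loc{i}{t_i^{*}+1}$ is occupied; by staticity the occupant is a single fixed agent $a_{\sigma(i)}$ resting at its limiting position. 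The pivotal case split is on this blocking agent: if $a_{\sigma(i)}$ sits at its goal then $\loc{i}{t_i^{*}+1} = \path{\sigma(i)}[-1]$, and the reached state is exactly a reachable terminal deadlock $(\sigma(i), i, t_i^{*}+1)$; otherwise $a_{\sigma(i)} \in B$, which yields a self-map $\sigma : B \to B$.

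In the remaining case $\sigma$ maps the finite set $B$ into itself, so iterating it from any element must eventually revisit a point, producing a cycle $i_1 \to i_2 \to \cdots \to i_m \to i_1$ with $\loc{i_k}{t_{i_k}^{*}+1} = \loc{i_{k+1}}{t_{i_{k+1}}^{*}}$ for all $k$ (indices modulo $m$). This is precisely a potential cyclic deadlock on agents with distinct indices, and it is reachable because the fixed execution attains the configuration in which every $a_{i_k}$ has clock $t_{i_k}^{*}$; hence a reachable cyclic deadlock exists. Either branch contradicts the hypothesis, closing the contrapositive. The delicate points I expect to be that the configuration truly becomes static (so the blocking agent $a_{\sigma(i)}$ is well defined and time-invariant rather than a changing occupant), and that the extracted tuples match the formal definitions verbatim, in particular that each index $t_{i_k}^{*}+1$ is a valid path position so the cyclic conditions are well formed and $\sigma(i) \neq i$ holds.
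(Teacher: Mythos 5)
Your proof is correct and follows essentially the same route as the paper: necessity by exhibiting a fair execution frozen by the reached deadlock configuration, and sufficiency via the observation that a permanently blocked agent's blocker is either parked at its goal (a reachable terminal deadlock) or itself blocked, so the blocking relation on stuck agents must close into a cycle (a reachable cyclic deadlock). Your contrapositive packaging with the stabilized limit configuration is a slightly more rigorous rendering of the paper's potential-function argument, but the key combinatorial step is identical.
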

\begin{sketch}
  Verifying that they are necessary is straightforward.
  To see that they are sufficient, consider a potential function $\phi \defeq \sum_{i \in A} (|\path{i}| - \clock_i)$ defined over a configuration $\{ \clock_1, \ldots, \clock_N \}$.
  Observe that $\phi$ is non-increasing and $\phi=0$ means that all agents have reached their goals.
  Furthermore, when $\phi > 0$, $\phi$ is guaranteed to decrease if each agent is activated at least once.
\end{sketch}

\section{Computational Complexity}
\label{sec:complexity}
This section studies the complexity of OTIMAPP.
In particular, we address two questions: the difficulty to find solutions (Sec.~\ref{sec:complexity:finding}) and the difficulty to verify solutions (Sec.~\ref{sec:complexity:verification}).
Our main results are that both problems are computationally intractable;
the former is NP-hard and the latter is co-NP-complete.
Both proofs are based on reductions from the 3-SAT problem, deciding satisfiability for a formula in conjunctive normal form with three literals in each clause.

\subsection{Finding Solutions}
\label{sec:complexity:finding}
We distinguish directed graphs and undirected graphs to analyze the complexity.
The following proof is partially inspired by the NP-hardness of MAPF on digraphs~\cite{nebel2020computational}.

\begin{theorem}[complexity on digraphs]
  OTIMAPP on \emph{directed} graphs is NP-hard.
  \label{thrm:np-hard-directed}
\end{theorem}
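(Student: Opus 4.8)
The plan is a polynomial-time reduction from 3-SAT. Fix a formula $\phi$ with variables $x_1,\dots,x_n$ and clauses $C_1,\dots,C_m$, each a disjunction of three literals, and build an OTIMAPP instance on a digraph $D$ whose feasible solutions correspond exactly to satisfying assignments; throughout I use Theorem~\ref{thrm:necessary-sufficient} to read ``feasible'' as ``no reachable cyclic or terminal deadlock.'' For each variable $x_i$ I introduce a variable agent with start $s_i$ and goal $g_i$ connected by exactly two internally vertex-disjoint directed simple paths, a \emph{true corridor} $P_i^{\mathrm{T}}$ and a \emph{false corridor} $P_i^{\mathrm{F}}$ that reconnect just before $g_i$; choosing $\pi_i=P_i^{\mathrm{T}}$ encodes $x_i=\mathrm{true}$ and $\pi_i=P_i^{\mathrm{F}}$ encodes $x_i=\mathrm{false}$. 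Each occurrence of a literal $\ell$ over $x_i$ in a clause gets a designated vertex placed on the corridor that \emph{falsifies} $\ell$ (positive-literal vertices on $P_i^{\mathrm{F}}$, negative ones on $P_i^{\mathrm{T}}$), so the variable agent traverses, and can be made to park at, the $\ell$-vertex precisely when $\ell$ is unsatisfied.

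Each clause $C_j=(\ell_{j1}\lor\ell_{j2}\lor\ell_{j3})$ is realized by a rotational gadget whose arcs join the three literal-vertices of $C_j$ into a single directed cycle in the sense of Definition~\ref{def:potential-cycle-deadlock}: a chain $\pi_{i_1}[t_1{+}1]=\pi_{i_2}[t_2]$, $\pi_{i_2}[t_2{+}1]=\pi_{i_3}[t_3]$, $\pi_{i_3}[t_3{+}1]=\pi_{i_1}[t_1]$ among the agents feeding the three gates. The gadget is designed so this cyclic deadlock becomes \emph{reachable} exactly when all three gates are occupied simultaneously, i.e.\ iff every literal of $C_j$ is unsatisfied; if some literal is true the corresponding variable agent takes the other corridor, one link of the chain is permanently missing, and the cycle cannot close. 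All goals are placed off every corridor so that no goal lies on another agent's path, ensuring the sole possible obstruction to feasibility is a cyclic clause deadlock rather than a terminal one.

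Correctness has two directions. Given a satisfying assignment, route each variable agent along the corridor dictated by its value; then every clause has a free gate, every rotational chain is broken, and no goal blocks a path, so by Theorem~\ref{thrm:necessary-sufficient} the routing is feasible. Conversely, from any feasible solution read the assignment off the variable agents' corridor choices; were some clause unsatisfied, an adversarial schedule could drive its three agents simultaneously onto their gate vertices and realize the rotational deadlock, contradicting feasibility via Theorem~\ref{thrm:necessary-sufficient}. The whole construction has size polynomial in $n+m$.

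The main obstacle is the clause gadget's gating and reachability: I must guarantee (i) that the rotational deadlock is genuinely \emph{reachable} --- some fair-schedulable prefix drives all three agents to the required clocks at once --- and not merely potential, and (ii) that no \emph{spurious} deadlock arises, either inside a clause when only one or two literals are false, or across distinct gadgets sharing a variable that occurs in many clauses; this forces per-occurrence literal and linking vertices and a wiring in which the only closable chain through a clause is its intended triangle. A further subtlety specific to directed (hence possibly cyclic) graphs is to exclude unintended \emph{non-simple} routes: since this theorem does not restrict to simple paths, the gadget arcs must be laid out so that any route other than the two corridors either fails to reach $g_i$ or immediately induces a reachable deadlock.
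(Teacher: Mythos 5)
Your high-level strategy matches the paper's: a polynomial reduction from 3-SAT with a two-corridor variable gadget and a clause gadget realized as a three-agent cyclic deadlock in the sense of Definition~\ref{def:potential-cycle-deadlock}. The difference is in \emph{who} closes the clause cycle, and that is exactly where your plan has a gap. You thread the variable agents' falsifying corridors through a shared triangle, so each gate vertex of a clause triangle necessarily lies on two different variables' corridors (it is the entry gate of one corridor and the exit gate of another). Consequently every gate has out-degree at least two in the union of planned arcs, and a variable agent can legitimately leave its own corridor at a gate, continue along another variable's corridor, and re-enter one of its own corridors at a later shared gate before reaching $g_i$. Such hybrid routes do not correspond to a truth assignment, which breaks the ``feasible $\Rightarrow$ satisfiable'' direction; the same vertex-sharing also reintroduces the risk of cyclic deadlocks spanning several clause triangles when a variable occurs in many clauses. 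You name both issues yourself as ``the main obstacle,'' but you supply no mechanism that resolves them, and resolving them is the substance of the proof rather than a detail. (A smaller, fixable point: if a clause contains two literals over the same variable, your triangle would require one agent to occupy two gates simultaneously, which the no-duplicates condition in Definition~\ref{def:potential-cycle-deadlock} forbids; you need the standard assumption that each clause mentions three distinct variables.)

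The paper's construction avoids all of this by introducing one dedicated \emph{literal agent} $c^j_k$ per occurrence. The variable agent $\chi_i$ never leaves its own small decider; each literal agent has exactly two routes, a short one that head-on conflicts (a two-agent reachable cyclic deadlock on a bidirectional edge) with the corresponding choice of $\chi_i$, and a detour through the clause constrainer; and the clause constrainer is a triangle visited only by the three literal agents of that clause. Because distinct variable deciders share no vertices and each clause triangle is entered only by its own three literal agents, the only potential cyclic deadlocks are the intended local ones, and both directions of the equivalence close cleanly. If you want to salvage your variant, this decoupling via per-occurrence literal agents is the missing idea.
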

\begin{proof}
  The proof is a reduction from the 3-SAT problem.
  Figure~\ref{fig:3-sat-finding} is an example of the reduction from a formula $(x_1 \lor x_2 \lor \lnot x_3) \land (\lnot x_1 \lor x_2 \lor x_3)$.

  \medskip
  \noindent
  \emph{A. Construction of an OTIMAPP instance.}
  We introduce two gadgets, called \emph{variable decider} and \emph{clause constrainer}.
  The OTIMAPP instance contains one variable decider for each variable and one clause constrainer for each clause.

  The variable decider for a variable $x_i$ assigns $x_i$ to true or false.
  This gadget contains one agent $\chi_i$ with two paths to reach its goal: \emph{left} or \emph{right}.
  Taking a left path corresponds to assigning $x_i$ to false, and vice versa.
  For the $j$-th clause $C^j$ in the formula, when its $k$-th literal is either $x_i$ or $\lnot x_i$, we further add one agent $c^j_k$ to the gadget.
  Its start and goal are positioned on the \emph{right} side from $\chi_i$ when the literal is a negation; otherwise, on the \emph{left} side.
  When several such agents are positioned on one side, let them connect (see the gadget for $x_2$).
  $c^j_k$ has two alternate paths to reach its goal: a path within the variable decider or a path via a clause constrainer.
  The former is available only when $\chi_i$ takes a path of the opposite direction to avoid a reachable cyclic deadlock.

  The clause constrainer for a clause $C^j$ connects the start and the goal of $c^j_k$.
  The gadget contains a triangle.
  Each literal $c^j_k$ enters this triangle from a distinct vertex and exits from another vertex.
  As a result, this gadget prevents three literals in $C^j$ from being false simultaneously; if not so, three agents enter the gadget and there is a reachable cyclic deadlock.

  The number of agents, vertices, and edges are all polynomial with respect to the size of the formula.

  \medskip
  \noindent
  \emph{B. The formula is  satisfiable if OTIMAPP has a solution}:
  the use of one clause constrainer by three agents leads to a reachable cyclic deadlock.
  Thus, at least one literal for each clause becomes true in any OTIMAPP solution.

  \medskip
  \noindent
  \emph{C. OTIMAPP has a solution if the formula is satisfiable}:
  If satisfiable, let $\chi_i$ take a path that follows the assignment.
  Let $c^j_k$ take a path within the variable decider when $\chi_i$ takes the opposite direction; otherwise, use the clause constrainer.
  Since three agents never enter one clause constrainer due to satisfiability, those paths constitute a solution.
\end{proof}

For undirected graphs, we limit solutions to those containing only simple paths.%
\footnote{
  We recently proved that it is NP-hard for the general case of undirected graphs.
  The formal proof will appear soon.
}

{
  \begin{figure}[t]
    \centering
    \includegraphics[width=1\linewidth]{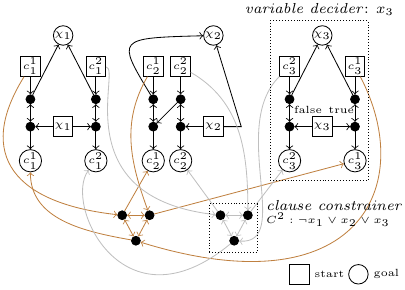}
    \caption{
      An OTIMAPP instance reduced from the 3-SAT formula $(x_1 \lor x_2 \lor \lnot x_3) \land (\lnot x_1 \lor x_2 \lor x_3)$.
    }
    \label{fig:3-sat-finding}
  \end{figure}
}

{
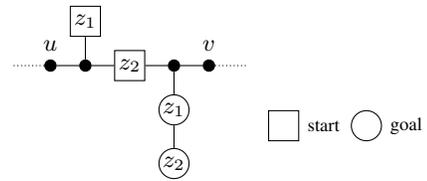
\begin{figure}
  \centering
  \begin{tikzpicture}
    \newcommand{\edgesize}{0.3cm}
    \newcommand{\edgesizehalf}{0.15cm}
    \newcommand{\edgesizethreetwo}{0.35cm}
    \newcommand{\longedgesize}{1.2cm}
    \small
    \tikzset{
      satnode/.style={vertex, minimum size=0.15cm, fill=black},
      start-node/.style={vertex, minimum size=0.4cm,rectangle},
      goal-node/.style={vertex, minimum size=0.4cm},
      line-c1/.style={brown,->},
      line-c2/.style={lightgray,->},
    }
    {
      \node[satnode, label=above:{\small $u$}](v1) at (0.0, 0.0) {};
      \node[satnode, right=\edgesize of v1](v2) {};
      \node[start-node, right=\edgesize of v2](v3) {$z_2$};
      \node[satnode, right=\edgesize of v3](v4) {};
      \node[satnode, right=\edgesize of v4,label=above:{\small $v$}](v5) {};
      \node[start-node, above=\edgesize of v2](v6) {$z_1$};
      \node[goal-node, below=\edgesize of v4](v7) {$z_1$};
      \node[goal-node, below=\edgesize of v7](v8) {$z_2$};
      \foreach \u / \v in {v1/v2,v2/v3,v3/v4,v4/v5,v6/v2,v4/v7,v7/v8}
      \draw[line] (\u) -- (\v);
      \draw[line, densely dotted] (v1) -- ($(v1) + (-0.5, 0.0)$);
      \draw[line, densely dotted] (v5) -- ($(v5) + (0.5, 0.0)$);
      \node[start-node, label=right:{\scriptsize start}](label-s) at (3.1, -0.8){};
      \node[goal-node, label=right:{\scriptsize goal}](label-g) at (4.2, -0.8){};
    }
  \end{tikzpicture}
  \caption{
    A \textit{oneway constrainer}.
    This gadget transforms an undirected edge $(u, v)$ to a directed one.
    Any agent is allowed to move only the way from $u$ to $v$ when limiting solutions to simple paths.
  }
  \label{fig:undirected-gadget}
\end{figure}
}

\begin{theorem}[complexity on undirected graphs]
  For OTIMAPP on \emph{undirected} graphs, it is NP-hard to find a solution with \emph{simple} paths.
  \label{thrm:np-hard-undirected}
\end{theorem}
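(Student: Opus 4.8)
The plan is to reduce from the directed case established in Theorem~\ref{thrm:np-hard-directed}. Starting from the digraph instance $D$ produced by that reduction, together with its start/goal assignment and its main agents, I would build an undirected instance $G$ by replacing every arc $(u,v)$ of $D$ with an \emph{orientation gadget}: a small undirected subgraph, attached at $u$ and $v$, that under the simple-path restriction permits a traversal only in the direction $u \to v$ without inducing a reachable deadlock. The new main agents keep the same start/goal vertices as in $D$, and each gadget carries one dedicated \emph{enforcer} agent. The claim to establish is that $G$ admits a feasible solution consisting only of simple paths if and only if $D$ admits a feasible solution, which by Theorem~\ref{thrm:np-hard-directed} holds if and only if the underlying 3-SAT formula is satisfiable.

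The core of the argument is the gadget itself. The intended behaviour is that the enforcer agent is routed so that it occupies the edge in the forward direction $u \to v$; then any main agent that tries to cross the gadget backwards, from $v$ to $u$, is forced (because its path is simple and hence cannot detour through the gadget) into a head-on swap with the enforcer, which is a length-two potential cyclic deadlock in the sense of Definition~\ref{def:potential-cycle-deadlock}. I would then exhibit an explicit fair execution driving both agents into the swap configuration, so that by Definition~\ref{def:reachable-cycle-deadlock} the deadlock is \emph{reachable}; by Theorem~\ref{thrm:necessary-sufficient} this rules out any feasible solution that uses a gadget backwards. Conversely, when every gadget is used in its forward direction (or not at all), the enforcer's forward route and the main agents' routes interleave only as ``following'' moves, never as a swap or a terminal clash, so the gadget is inert and contributes no reachable deadlock.

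With the gadget in hand the two directions of the equivalence are routine. For the forward implication I would lift a directed solution of $D$ vertex by vertex: each arc traversal maps to the forward crossing of its gadget, enforcers take their canonical forward routes, and because the directed paths are simple and the gadget's forward route visits fresh vertices, the lifted paths remain simple; using the inertness property one checks that no new reachable cyclic or terminal deadlock appears, so Theorem~\ref{thrm:necessary-sufficient} yields feasibility. For the reverse implication I would project a simple-path solution of $G$ back to $D$: the orientation property shows each main agent crosses every gadget it uses in the forward direction, so contracting the gadget interiors recovers a set of directed paths in $D$, and any reachable deadlock in $D$ would pull back to one in $G$, contradicting feasibility. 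Finally, since each arc is replaced by a constant-size gadget, the number of agents, vertices, and edges blows up only polynomially.

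The main obstacle I anticipate is making the gadget simultaneously satisfy three competing requirements: reverse traversal must create a deadlock that is \emph{reachable} and not merely potential, which forces a careful choice of enforcer start and goal together with an explicit witnessing execution; forward traversal must leave the gadget completely inert; and the gadgets must compose. The composition issue is delicate because a single vertex of $D$ — for example the shared start/goal clusters in the variable-decider of Figure~\ref{fig:3-sat-finding} — is incident to several arcs, so many gadgets meet at one hub; I must ensure that the enforcer agents of different gadgets never chain together into a new global cyclic deadlock, and that the injectivity of the start and goal assignments is preserved across all enforcers. The simple-path restriction is exactly what keeps the local orientation argument airtight: without it a main agent could backtrack or detour around an enforcer and defeat the gadget, which is precisely why the general undirected case remains open.
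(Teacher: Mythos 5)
Your high-level strategy is the same as the paper's: keep the directed reduction of Theorem~\ref{thrm:np-hard-directed} and replace each arc by a constant-size undirected ``orientation gadget'' whose dedicated agent(s) make backward traversal infeasible, with a polynomial blow-up. The gap is in the gadget itself, which you leave unconstructed, and the specific mechanism you propose --- \emph{one} enforcer agent whose forward route across the contested edge produces a head-on swap with any backward-crossing main agent --- cannot be made to work. The reason is that the solver chooses the enforcer's path too. The enforcer's start and goal must be interior to the gadget (to keep the forward case inert and the start/goal maps injective), so they hang off vertices near $u$ and near $v$; but since the surrounding graph is undirected and connected, the enforcer always has an alternative \emph{simple} path that exits the gadget at $u$, travels through the rest of the graph, and re-enters at $v$, never using the contested edge in the forward direction. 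On that detour there is no swap with a backward-crossing main agent (the shared edges are traversed in the \emph{same} direction), no reachable cyclic deadlock arises, and with a single agent there is nothing to create a terminal deadlock either --- so the orientation is not enforced and the reverse implication of your equivalence fails. You correctly flag that the gadget must satisfy competing requirements, but you do not identify this escape route, and it is fatal to the one-enforcer design.

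The paper closes exactly this hole by using \emph{two} gadget agents $z_1,z_2$ with a nested start/goal layout: $z_2$'s goal lies beyond $z_1$'s goal and is reachable only through it, while in the canonical forward routing $z_1$ must pass through $z_2$'s start, which forces $z_2$ to clear $z_1$'s goal first and keeps the potential terminal deadlock unreachable. If a main agent crosses backward, $z_1$ and $z_2$ are both forced onto the external detour re-entering at $v$, and there $z_1$ can reach its goal before $z_2$, yielding a \emph{reachable terminal deadlock} and hence infeasibility by Theorem~\ref{thrm:necessary-sufficient}. Note also that the role of the simple-path restriction is primarily to constrain the \emph{gadget agents}: with non-simple paths $z_1,z_2$ could leave and re-enter while still guaranteeing $z_2$ arrives first, which is precisely why the general undirected case is open --- not, as you suggest, because a main agent could backtrack around the enforcer.
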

\begin{sketch}
  We add a new gadget called \textit{oneway constrainer}, which transforms an undirected edge to a virtually directed one, to the proof of the NP-hardness on digraphs (Thm.~\ref{thrm:np-hard-directed}).
  We derive the claim by replacing all directed edges, except for bidirectional edges, with this gadget.
  Figure~\ref{fig:undirected-gadget} illustrates it, including two new agents: $z_1$ and $z_2$.
  In this gadget, any agents outside of the gadget are allowed to move only in the direction from $u$ to $v$.
\end{sketch}

\subsection{Verification}
\label{sec:complexity:verification}
The co-NP completeness of the verification relies on the following lemma, stating that finding cyclic deadlocks is computationally intractable.
Its entire proof is delivered in the Appendix.
\begin{lemma}[complexity of detecting cyclic deadlocks]
  Determining whether a set of paths contains either reachable or potential cyclic deadlocks is NP-complete.
  \label{lemma:deadlock-np-comp}
\end{lemma}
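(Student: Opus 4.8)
\emph{Membership in NP.} First I would show that both variants lie in NP. By Definition~\ref{def:potential-cycle-deadlock}, a potential cyclic deadlock is a pair of tuples $\left((i,\ldots,l),(t_i,\ldots,t_l)\right)$ whose length is bounded by the number of agents, and checking the chain of equalities $\loc{i}{t_i+1}=\loc{j}{t_j}, \ldots, \loc{l}{t_l+1}=\loc{i}{t_i}$ together with distinctness of the first tuple takes polynomial time; hence this tuple is a polynomial certificate and detecting a potential cyclic deadlock is in NP. For the reachable variant I would attach an execution as an additional witness. Since every successful activation increments exactly one clock and clocks never decrease, any reachable configuration with clocks $(t_1,\ldots,t_n)$ is reached after exactly $\sum_i t_i \le \sum_i(|\path{i}|-1)$ successful activations; deleting the no-op activations (which do not alter the configuration) leaves a legal execution of polynomial length. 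Verifying that this execution is legal and ends in the configuration matching Definition~\ref{def:reachable-cycle-deadlock} is polynomial, so the reachable variant is in NP as well.

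\emph{NP-hardness on digraphs.} Next I would reduce 3-SAT to detection, building for a formula $\varphi$ a single \emph{fixed} candidate solution (rather than one admitting a choice of paths, as in Theorem~\ref{thrm:np-hard-directed}). The guiding observation is that a potential cyclic deadlock is exactly a directed cycle $w_1\to\cdots\to w_m\to w_1$ of the graph in which every edge $(w_p,w_{p+1})$ is traversed by some agent's path and all participating agents are pairwise distinct; thus the ``distinct agents'' requirement of Definition~\ref{def:potential-cycle-deadlock} plays the role of a ``use each edge at most once'' constraint. I would lay out one variable gadget per variable and one clause gadget per clause along a single prospective cycle, reusing the philosophy of the variable deciders and clause constrainers of Theorem~\ref{thrm:np-hard-directed}: each variable gadget forces the sought cycle to commit to one of two parallel arcs (true or false), and each clause gadget contributes an arc that the cycle can incorporate only when at least one of its three literals has been committed true. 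Consistency between a variable's value and the literals of its clauses would be enforced by the distinct-agent constraint, arranging the arcs so that committing the cycle to one value of $x_i$ consumes the agents guarding the contradictory literal arcs. The ``if'' direction then reads off a satisfying assignment from any cycle, and the ``only if'' direction threads a cycle through the gadgets from a satisfying assignment.

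\emph{Undirected graphs and the reachable variant.} For undirected graphs I would replace every directed arc (other than bidirectional ones) by the direction-enforcing gadget of Figure~\ref{fig:gadget-directed}, exactly as in Theorem~\ref{thrm:np-hard-undirected}, keeping the instance polynomial; this simultaneously establishes the claim for both directed and undirected graphs. To cover the reachable variant with the same reduction, I would design the gadgets so that whenever the constructed candidate admits a potential cyclic deadlock it is moreover \emph{reachable}: the witnessing configuration is attained by the short monotone execution that advances each participating agent to the prescribed clock. Consequently, in the constructed instance a potential cyclic deadlock exists if and only if a reachable one does, and both hold if and only if $\varphi$ is satisfiable, yielding NP-hardness for both problems.

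\emph{Main obstacle.} The delicate part is the ``only if'' direction of the reduction, i.e.\ guaranteeing that \emph{no spurious} cyclic deadlock exists: every directed cycle covered by distinct agents must decode to a globally consistent satisfying assignment. In particular, handling literals that occur in several clauses (where a single agent cannot cover more than one cycle arc) and preventing short unintended cycles inside a gadget require the gadget wiring to be tight. Simultaneously ensuring that each genuine deadlock is reachable by a short execution, so that the reachable case inherits the hardness, is the second point that must be checked with care.
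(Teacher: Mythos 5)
Your proposal follows essentially the same route as the paper: a reduction from 3-SAT that outputs a fixed instance together with a fixed solution candidate, with variable and clause gadgets threaded along a single prospective cycle so that the distinct-agents requirement forces consistency, arranged so that every potential cyclic deadlock is also reachable; your NP-membership argument, in particular the observation that no-op activations can be deleted to leave a polynomial-length witness execution, is a slightly more careful version of the paper's one-line claim and is a welcome addition. The one place you deviate is the undirected case: you propose to import the direction-enforcing gadget of Figure~\ref{fig:gadget-directed}, but that gadget constrains which paths may appear in a \emph{feasible} solution (via terminal deadlocks involving $z_1$ and $z_2$), which has no force here because the solution candidate is fixed in advance. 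Since Definitions~\ref{def:potential-cycle-deadlock} and~\ref{def:reachable-cycle-deadlock} depend only on the given paths and not on edge orientations, the directed construction carries over to undirected graphs verbatim --- this is exactly what the paper does (``the proof procedure applies to the undirected case without modifications'') --- and adding the gadget would only oblige you to verify that the extra agents create no spurious deadlocks. Beyond that, what remains is the concrete wiring you honestly flag as the main obstacle; the paper supplies it by having each literal agent use one edge of its variable decider's upper or lower path and one edge of its clause constrainer, with a special agent $z$ closing the cycle.
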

We then derive the complexity result since a solution has no reachable deadlocks.
\begin{theorem}[complexity of verification]
  Verifying a solutions of OTIMAPP is co-NP-complete.
  \label{thrm:co-np-hard:verification}
\end{theorem}
\begin{proof}
  Thm.~\ref{thrm:necessary-sufficient} states that a solution has no reachable terminal/cyclic deadlocks.
  Verifying no terminal deadlocks is in co-NP; a terminal deadlock is verified in polynomial time with an execution schedule.
  Verifying no potential deadlocks is co-NP-complete according to Lemma~\ref{lemma:deadlock-np-comp}.
\end{proof}

\section{Solvers}
\label{sec:solvers}
We now focus on how to solve OTIMAPP.
In practice, it is difficult to use the necessary and sufficient condition (Thm.~\ref{thrm:necessary-sufficient}) because we have to find corresponding schedules.
This motivates to build a relaxed sufficient condition.
\begin{theorem}[relaxed condition]
  Given an OTIMAPP instance, a set of path $\{ \path{1}, \ldots, \path{N} \}$ is a solution when there are
  (1)~no use of other goals, i.e., $g_j \not\in \path{i}$ for all $i \neq j$ except for $s_i = g_j$, and
  (2)~no potential cyclic deadlocks.
  \label{thrm:sufficient}
\end{theorem}
It is straightforward to see that the above conditions are respectively sufficient for the two conditions in Thm.~\ref{thrm:necessary-sufficient}.
Given a set of paths, ``no use of other goals'' is easy to check while ``no potential cyclic deadlocks'' is intractable to compute (Lemma~\ref{lemma:deadlock-np-comp}).
Nevertheless, \emph{detecting potential cyclic deadlock is the heart of solving OTIMAPP}.
Thus, we first explain how to detect potential cyclic deadlocks.
After that, two algorithms to solve OTIMAPP are presented.

\subsection{Detection of Potential Deadlocks}
Due to the space limit, we only describe the intuition behind the algorithm.
The details are in the Appendix (Alg.~\ref{algo:detecting-deadlock}).
We first introduce a \emph{fragment}, a candidate of potential cyclic deadlocks.
\begin{definition}[fragment]
  Given a set of paths $\{\path{1},$$\ldots,$$\path{N}\}$, a \emph{fragment} is a tuple $\left((i, j, k, \ldots, l), (t_i, t_j, t_k, \ldots, t_l)\right)$ such that $\loc{i}{t_i+1}=\loc{j}{t_j} \land \loc{j}{t_j+1} = \loc{k}{t_k} \land \ldots = \loc{l}{t_l}$.
  The elements of the first tuple are without duplicates.
  \label{def:fragment}
\end{definition}
\noindent
We say that a fragment \emph{starts} from a vertex $u$ when $\loc{i}{t_i} = u$ and a fragment \emph{ends} at a vertex $v$ when $\loc{l}{t_l+1} = v$.
A fragment that ends at its start (i.e., $\loc{l}{t_l+1} = \loc{i}{t_i}$) is a potential cyclic deadlock.

{
  \newcommand{\f}[3]{\m{\left[(#1), (#2), (#3)\right]}}
  \begin{table}[t]
    \centering
    {
    \footnotesize
    \begin{tabular}{rrl}
      \toprule
      induction & key & new fragments
      \\ \midrule
      $\{\pi_1\}$
          & $u$ & \f{1}{1}{u,v} \\
          & $v$ & \f{1}{2}{v,w}
      \\ \midrule
      $\{\pi_1, \pi_2\}$
          & $u$ & \f{1,2}{1,1}{u,v,x} \\
          & $v$ & \f{2}{1}{v,x} \\
          & $x$ & \f{2}{2}{x,y}
      \\ \midrule
      $\{\pi_{1}, \pi_{2}, \pi_{3}\}$
          & $u$ & \textcolor{blue}{\f{1,2,3}{1,1,2}{u,v,x,u}} \\
          & $v$ & \f{2,3}{1,2}{v,x,u} \\
          & $x$ & \f{3}{2}{x,u}, \f{3,1}{2,1}{x,u,v}\\
          & $z$ & \f{3}{1}{z,x}, \f{3,2}{1,2}{z,x,y}
      \\ \bottomrule
    \end{tabular}
    }
    \caption{
      Example of detecting potential cyclic deadlocks.
      We describe the update of \tablefrom for $\path{1} = (u, v, w), \path{2} = (v, x, y), \path{3} = (z, x, u)$.
      The table uses $[(\text{agents}), (\text{progress indexes}), (\text{path})]$ as a notation of fragment,  where ``path'' is a corresponding sequence of vertices of the fragment.
      The algorithm halts with a blue-colored fragment, a detected potential cyclic deadlock.
    }
    \label{table:update-tablefrom}
  \end{table}
}

Using fragments, we construct an algorithm to detect a potential cyclic deadlock in a set of paths if it exists.
This is based on induction on \path{i}.
The induction hypothesis for $i$ is that there are no potential cyclic deadlocks for $\{ \path{1}, \ldots, \path{i-1} \}$ and all fragments for them are identified.
All new fragments about \path{i} are categorized into three groups:
(1)~a fragment only with \path{i},
(2)~a fragment that extends existing fragments, and
(3)~a fragment that connects existing two fragments.
In either case, if a newly created fragment ends at its start, this is a deadlock.

The algorithm realizes this procedure by managing two tables that store fragments: \tablefrom and \tableto.
Both tables take one vertex as a key.
One entry in \tablefrom stores all fragments starting from the vertex.
One entry in \tableto stores all fragments ending at the vertex.
Table~\ref{table:update-tablefrom} presents an example to detect deadlocks.

\subsection{Prioritized Planning (PP)}
Prioritized planning~\cite{erdmann1987multiple,silver2005cooperative} is neither complete nor optimal, but it is computationally cheap hence a popular approach to MAPF.
It plans paths sequentially while avoiding collisions with previously planned paths.
Instead of inter-agent collisions, solvers for OTIMAPP have to care about potential cyclic deadlocks.

Algorithm~\ref{algo:pp} is prioritized planning for OTIMAPP, named \emph{PP}.
When planning a single-agent path, PP avoids using (1)~goals of other agents and (2)~edges causing potential cyclic deadlocks~[Line~\ref{algo:pp:planning}].
The latter is detected by storing all fragments created by previously computed paths.
For this purpose, PP uses the adaptive version of Alg.~\ref{algo:detecting-deadlock} [Line~\ref{algo:pp:register}] in the Appendix.
A path satisfying the constraints can be found by ordinary pathfinding algorithms.
If not, PP returns \FAILURE.
The correctness of PP is derived from Thm.~\ref{thrm:sufficient}.

{
  \begin{algorithm}[tb!]
  \caption{PP: Prioritized Planning}
  \label{algo:pp}
  \textbf{Input}:~an OTIMAPP instance\\
  \textbf{Output}:~a solution $\{ \path{1}, \ldots, \path{N} \}$ or \FAILURE
  \begin{algorithmic}[1]
    \State $\tablefrom, \tableto \leftarrow \emptyset$
    \For{$i = 1\ldots~|A|$}
    \State $\path{i} \leftarrow
    \begin{aligned}[t]
      &\text{a path for agent}~i~\text{while avoiding the use of}
      \\
      &\cdot~g_j, \forall j \neq i, \text{except for}~s_i
      \\
      &\cdot~(u, v) \in E~\text{s.t.}~\exists \theta \in \tableto[u]~\text{and}~\theta~\text{starts from}~v
      \\
      &\hspace{0.3cm}\triangleright\text{avoiding cyclic deadlocks for}~\path{j}, j < i
    \end{aligned}$
    \label{algo:pp:planning}
    \IFSINGLE{\path{i} is not found}{\Return \FAILURE}
    \State update \tablefrom and \tableto with \path{i} using Algorithm~\ref{algo:detecting-deadlock}
    \label{algo:pp:register}
    \EndFor
    \State \Return $\{ \path{1}, \ldots, \path{N} \}$
  \end{algorithmic}
\end{algorithm}
}

PP is simple but incomplete.
In particular, the planning order of agents is crucial; an instance may be solved or may not be solved as illustrated in Fig.~\ref{fig:prioritization}.
One resolution is repeating PP with random priorities until the problem is solved; let call this PP$^+$.
However, finding good orders can be challenging because there are $|A|!$ patterns.
This motivates us to develop a search-based solver, described in the next.

\begin{figure}[tb!]
  \centering
  \begin{tikzpicture}
    \newcommand{\edgesize}{0.6cm}
    %
    \node[agent1](v1) at (0.0, 0.0) {$i$};
    \node[vertex, right=\edgesize of v1.center](v2) {};
    \node[vertex, right=\edgesize of v2.center](v3) {};
    \node[vertex, right=\edgesize of v3.center](v4) {};
    \node[vertex](v5) at ($(v2) + (-0.5,  -0.7)$) {};
    \node[agent2](v6) at ($(v3) + ( 0.5,  -0.7)$) {$j$};
    \node[vertex](v7) at ($(v2) + ( 0.0,  -1.4)$) {};
    \node[vertex](v8) at ($(v3) + ( 0.0,  -1.4)$) {};
    \foreach \u / \v in {v1/v2,v2/v3,v3/v4,v2/v5,v3/v6,v5/v7,v6/v8,v7/v8}
    \draw[line] (\u) -- (\v);
    \draw[line,->,color={rgb:red,0;green,1;blue,3},very thick]
    (0.2, 0.2) -- (2.3, 0.2);
    \draw[line,->,color={rgb:red,3;green,1;blue,0},very thick, densely dotted]
    (2.0, -0.6) -- (1.7, -0.2) -- (0.9, -0.2) -- (0.6, -0.6);
    \draw[line,->,color={rgb:red,3;green,1;blue,0},very thick]
    (2.0, -0.8) -- (1.7, -1.2) -- (0.9, -1.2) -- (0.6, -0.8);
  \end{tikzpicture}
  \caption{
    Example that the planning order affects the solvability.
    When $i$ plans prior to $j$, PP results in success with solid lines.
    PP fails if $j$ plans first and takes the dotted line.
  }
  \label{fig:prioritization}
\end{figure}
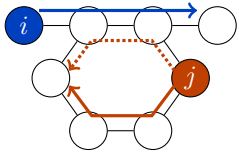

\subsection{Deadlock-based Search (DBS)}
We present \emph{deadlock-based search (DBS)} to solve OTIMAPP, based on a popular search-based MAPF solver called \emph{conflict-based search (CBS)}~\cite{sharon2015conflict}.
CBS uses a two-level search.
The high-level search manages collisions between agents.
When a collision occurs between two agents at some time and location, two possible resolutions are depending on which agent gets to use the location at that time.
Following this observation, CBS constructs a binary tree where each node includes constraints prohibiting to use space-time pairs for certain agents.
In the low-level search, agents find a single path constrained by the corresponding high-level node.

Instead of collisions, DBS considers potential cyclic deadlocks.
When detecting a deadlock in a set of paths, a resolution is that one of the agents in the deadlock avoids using the edge.
Thus, the constraints identify which agents prohibit using which edges in which orientation.

{
  \begin{algorithm}[tb!]
  \caption{DBS: Deadlock-based Search}
  \newcommand{\OPEN}{\m{\mathit{OPEN}}}
  \label{algo:cp}
  \textbf{Input}:~an OTIMAPP instance\\
  \textbf{Output}:~a solution $\{ \path{1}, \ldots, \path{N} \}$ or \FAILURE
  \begin{algorithmic}[1]
    \State $R.\mathit{constraints} \leftarrow \emptyset$
    \label{algo:cp:root:constraints}
    \State $R.\mathit{paths} \leftarrow$~find paths with ``no use of other goals''
    \label{algo:cp:root:paths}
    \State insert $R$ to \OPEN
    \label{algo:cp:root:insert}
    \Comment \OPEN: priority queue
    \While{$\OPEN \neq \emptyset$}
    \label{algo:cp:while}
    \State $N \leftarrow \OPEN.\mathit{pop}()$
    \label{algo:cp:pop}
    \State $C \leftarrow$ get constraints of $N$ using Algorithm~\ref{algo:detecting-deadlock}
    \label{algo:cp:get-constraints}
    \IFSINGLE{$C = \emptyset$}{\Return $N.\mathit{paths}$}
    \label{algo:cp:return-solution}
    \For{$(i, u, v) \in C$}
    \label{algo:cp:exapnd:for}
    \State $\p{N} \leftarrow
    \begin{aligned}[t] \{
      &\mathit{constraints}: N.\mathit{constraints} + (i, u, v),\\
      &\mathit{paths}: N.\mathit{paths}\}
    \end{aligned}$
    \State update \path{i} in $\p{N}.\mathit{paths}$ to follow $\p{N}.\mathit{constraints}$
    \label{algo:cp:update}
    \IFSINGLE{\path{i} is found}{insert $\p{N}$ to \OPEN}
    \EndFor
    \label{algo:cp:exapnd:for:end}
    \EndWhile
    \label{algo:cp:while:end}
    \State \Return \FAILURE
    \label{algo:cp:return-failure}
  \end{algorithmic}
\end{algorithm}
}

Algorithm~\ref{algo:cp} describes the high-level search of DBS.
Each node in the high-level search contains \emph{constraints}, a list of tuples consisting of one agent and two vertices (representing ``from vertex'' and ``to vertex''), and \emph{paths} as a solution candidate.
The root node does not have any constraints~[Line~\ref{algo:cp:root:constraints}].
Its paths are computed following ``no use of other goals'' of Thm.~\ref{thrm:sufficient}~[Line~\ref{algo:cp:root:paths}].
Then, the node is inserted into a priority queue \open~[Line~\ref{algo:cp:root:insert}].
In the main loop [Line~\ref{algo:cp:while}--\ref{algo:cp:while:end}], DBS repeats;
(1)~Picking up one node~[Line~\ref{algo:cp:pop}].
(2)~Checking a deadlock and creating constraints~[Line~\ref{algo:cp:get-constraints}].
(3)~Returning a solution if the paths contain no deadlocks~[Line~\ref{algo:cp:return-solution}].
(4)~If not, creating successors and inserting them to \open~[Line~\ref{algo:cp:exapnd:for}--\ref{algo:cp:exapnd:for:end}].
DBS returns \FAILURE when \open becomes empty~[Line~\ref{algo:cp:return-failure}].
We complement several details below.

{
\smallskip
\newcommand{\myitemize}[2]{\noindent{$\blacktriangleright$~\emph{#1}:}#2\smallskip}

\myitemize{Line~\ref{algo:cp:pop}}{
  \open is a priority queue and needs the order of nodes.
  DBS works in any order but good orders reduce the search effort.
  As effective heuristics, we use the descending order of the number of deadlocks with two agents, which is computed within a reasonable time.}

\myitemize{Line~\ref{algo:cp:get-constraints}}{
  Let $((i, j, k, \ldots, l),$$(t_i, t_j, t_k, \ldots, t_l))$ be a returned deadlock by Alg.~\ref{algo:detecting-deadlock}.
  Then, create constraints $(i, \loc{i}{t_i}, \loc{i}{t_i+1}), (j, \loc{j}{t_j}, \loc{j}{t_j+1}), \ldots, (l, \loc{l}{t_l}, \loc{l}{t_l+1})$.}

\myitemize{Line~\ref{algo:cp:update}}{
  forces one path \path{i} in the node to follow the new constraints.
  This low-level search must follow ``no use of other goals,'' furthermore, all edges in the constraints for $i$.
  If not found, DBS discards the corresponding successor.}
}

\begin{theorem}[DBS]
  DBS returns a solution when solutions satisfying Thm.~\ref{thrm:sufficient} exist; otherwise returns \FAILURE.
  \label{thrm:dbs}
\end{theorem}

\paragraph{Example}
We describe an example of DBS using Fig.~\ref{fig:prioritization}.
Assume that the initial path of $i$ is the solid blue line and the path for $j$ is the dashed red line [Line~\ref{algo:cp:root:paths}].
This node is inserted into \open~[Line~\ref{algo:cp:root:insert}] and is expanded immediately~[Line~\ref{algo:cp:pop}].
There is one potential cyclic deadlock in the paths then two constraints are created: either $i$ or $j$ avoids using the shared edge~[Line~\ref{algo:cp:update}].
Two child nodes are generated, however, the node that changes $i$'s path is invalid because there is no such path without the use of the goal of $j$.
Another one is valid; $j$ takes the solid red line.
Therefore, one node is added to \open from the root node.
In the next iteration, this newly added node is expanded.
There are no potential cyclic deadlocks in this node;
DBS returns its paths as a solution.

\paragraph{Optimization}
Although this paper focuses on a feasibility problem, DBS can adapt to optimization problems.
As objective functions, total path length and maximum path length in a solution can be defined.
Those optimization problems are solved optimally by DBS when it prioritizes high-level search nodes with smaller scores, as commonly done in CBS.
Note that metrics that assess time aspects such as total traveling time used in MAPF studies are significantly affected by execution schedules; the adaptation is not trivial.

\section{Evaluation}
\label{sec:eval}
This section empirically demonstrates that OTIMAPP solutions are computable to some extent (Sec.~\ref{sec:stress-test}) and they are useful in adverse environments about timings (Sec.~\ref{sec:delay-tolerance}) through the simulation experiments.
We also present OTIMAPP execution with robots (Sec.~\ref{sec:demo}).
The simulator was coded in C++ and the experiments were run on a desktop PC with Intel Core~i9 \SI{2.8}{\giga\hertz} CPU and \SI{64}{\giga\byte} RAM.

\subsection{Stress Test}
\label{sec:stress-test}
\paragraph{Setup}
Each solver was tested with a timeout of \SI{5}{\minute} on four-connected undirected grids picked up from~\cite{stern2019def}, as a graph $G$.
We also tested random graphs, shown in the Appendix.
All instances were generated by setting a start $s_i$ and a goal $g_i$ randomly while ensuring that $s_i$ and $g_i$ have at least one path without the use of other goals; otherwise, it violates ``no use of other goals'' of Thm.~\ref{thrm:sufficient}.
Note that unsolvable instances might still be included.

{
  \setlength{\tabcolsep}{0pt}
  \newcommand{\colwidth}{0.32\hsize}
  \newcommand{\plotStress}[1]{
    \begin{minipage}[t]{\colwidth}
      \centering
      \IfFileExists{fig/raw/stress-#1.pdf}{\includegraphics[width=1.0\linewidth]{fig/raw/stress-#1.pdf}}{}
    \end{minipage}
  }
  \newcommand{\plotCactus}[1]{
    \begin{minipage}[t]{\colwidth}
      \centering
      \IfFileExists{fig/raw/cactus-#1.pdf}{\includegraphics[width=1.0\linewidth]{fig/raw/cactus-#1.pdf}}{}
    \end{minipage}
  }
  \newcommand{\plotProfiling}[1]{
    \begin{minipage}[t]{\colwidth}
      \centering
      \IfFileExists{fig/raw/profiling-#1.pdf}
      {\includegraphics[width=0.95\linewidth,right]{fig/raw/profiling-#1.pdf}}{}
    \end{minipage}
  }
  \newcommand{\headerGrid}[4]{
    \begin{minipage}[t]{\colwidth}
      \centering
      {
        \setlength{\tabcolsep}{1pt}
        \renewcommand{\arraystretch}{0.8}
        \begin{tabular}{lr}
          {\scriptsize\field{#1}}
          & \multirow{2}{*}{\includegraphics[width=0.23\linewidth]{fig/raw/#1.pdf}}\\
          {\tiny\m{#2\stimes #3~(#4)}} &
        \end{tabular}
      }
    \end{minipage}
  }
  \begin{figure}[tb!]
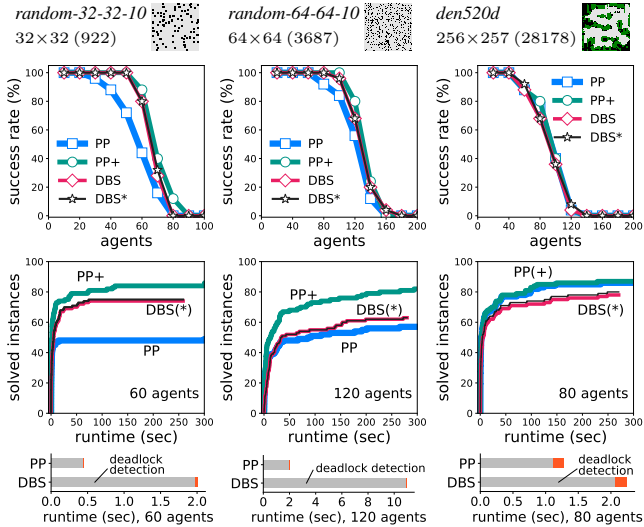

    \centering
    \begin{tabular}[t]{ccc}
      \headerGrid{random-32-32-10}{32}{32}{922}
      & \headerGrid{random-64-64-10}{64}{64}{3687}
      & \headerGrid{den520d}{256}{257}{28178}
        \medskip\\
      \plotStress{random-32-32-10}
      & \plotStress{random-64-64-10}
      & \plotStress{den520d}
      \\
      \plotCactus{random-32-32-10}
      & \plotCactus{random-64-64-10}
      & \plotCactus{den520d}
      \\
      \plotProfiling{random-32-32-10}
      & \plotProfiling{random-64-64-10}
      & \plotProfiling{den520d}
    \end{tabular}
    \caption{
      Stress test on 4-connected grids.
      The success rate is based on 25 identical instances.
      DBS$^\ast$ includes detected instances that are unsolvable for DBS before timeout, which is not possible for PP$^{(+)}$.
      We also present accumulated runtime with a fixed number of agents over 100 instances, and runtime profiling (median) of each solver over success instances for both solvers.
    }
    \label{fig:stress-test}
  \end{figure}
}

\paragraph{Result}
Fig.~\ref{fig:stress-test} presents the results.
The main findings are:
(1)~Both solvers can solve instances with tens of agents in various maps within a reasonable time.
(2)~PP often fails due to priority orders (e.g., Fig.~\ref{fig:prioritization}) while PP$^+$ and DBS can overcome such limitations to some extent.
(3)~A bottleneck of each solver is the procedure of detecting potential cyclic deadlocks, an NP-hard problem (Lemma.~\ref{lemma:deadlock-np-comp}).
This also leads to similar success rates of PP$^+$ and DBS.

\subsection{Delay Tolerance}
\label{sec:delay-tolerance}
We next show that OTIMAPP solutions (if found) are useful in a simulated environment with stochastic delays of agents' moves built on conventional MAPF, called MAPF-DP (with Delay Probabilities)~\cite{ma2017multi}.
Given a graph and start-goal pairs for each agent, the aim of MAPF is to move agents to their goals without collisions.
Collisions occur when two agents occupy the same vertex or traverse the same edge simultaneously.
Time is discrete.
All agents synchronously take actions, i.e., either move to an adjacent vertex or stay at the current location.
MAPF-DP emulates the imperfect execution of MAPF by introducing the possibility $p_i$ of unsuccessful moves for agent $i$ (remaining there).

\paragraph{Setup}
The delay probabilities $p_i$ were chosen uniformly at random from $[0, \bar{p}]$, where $\bar{p}$ is the upper bound of $p_i$.
The higher $\bar{p}$ means that agents delay often, and vice versa.
The metric is the total traveling time of agents; smaller values mean less wasting time at runtime.
We tested the following two as baselines:
(1)~MCPs~\cite{ma2017multi} force agents to preserve order relations of visiting one vertex in an offline MAPF plan at runtime.
The plan was obtained by ECBS~\cite{barer2014suboptimal}.
(2)~Causal-PIBT~\cite{okumura2021time} is online time-independent planning, that is, each agent repeats one-step planning and action adaptively to surrounding current situations.
The other details are in the Appendix.

\paragraph{Result}
Table~\ref{table:result-delay} shows that the execution of OTIMAPP solutions outperforms the alternatives.
This is because:
(1)~Unlike MCPs, OTIMAPP solutions are free from temporal dependencies of offline plans that one agent delays are possibly fatal.
(2)~Unlike Causal-PIBT, agents follow long-term plans and avoid possible congested locations.

\paragraph{Discussion}
Although finding OTIMAPP solutions is challenging, Table~\ref{table:result-delay} motivates us to compute them.
Meanwhile, the other approaches can solve larger instances with more agents (e.g., $|A|=200$) and with much smaller planning time than solving OTIMAPP.
Moreover, there are situations where OTIMAPP has no solutions while the others can find feasible plans because OTIMAPP assumes no intervention at runtime.
One future direction pursues to fill these gaps.

{
  \setlength{\tabcolsep}{1.2mm}
  \newcommand{\ci}[1]{\tiny(#1)}
  \newcommand{\w}[1]{\textbf{#1}}
  \begin{table}[t]
    \centering
    {
    \small
    \begin{tabular}{lrlrlrl}
      \toprule
      $|A|=35$
      & \multicolumn{2}{c}{$\bar{p}=0.2$}
      & \multicolumn{2}{c}{$\bar{p}=0.5$}
      & \multicolumn{2}{c}{$\bar{p}=0.8$}
      \\\midrule
      MCPs+ECBS & 1015 & \ci{1004,1026} & 1422 & \ci{1404,1440} & 2551 & \ci{2507,2596}\\
      Causal-PIBT & 986 & \ci{976,995} & 1238 & \ci{1225,1250} & 1841 & \ci{1816,1866} \\
      OTIMAPP & \w{941} & \ci{931,951} & \w{1178} & \ci{1165,1190} & \w{1730} & \ci{1707,1752}
      \\\toprule
      $\bar{p}=0.5$
      & \multicolumn{2}{c}{$|A|=20$}
      & \multicolumn{2}{c}{$|A|=40$}
      & \multicolumn{2}{c}{$|A|=60$}
      \\\midrule
      MCPs+ECBS & 724 & \ci{711,736} & 1698 & \ci{1678,1716} & 2938 & \ci{2911,2964} \\
      Causal-PIBT & 662 & \ci{653,671} & 1466 & \ci{1453,1479} & 2425 & \ci{2405,2444} \\
      OTIMAPP & \w{639} & \ci{631,648} & \w{1395} & \ci{1383,1408} & \w{2328} & \ci{2311,2345}
      \\ \bottomrule
    \end{tabular}
    }
    \caption{
      Total traveling time on MAPF-DP.
      All settings used \field{random-32-32-10}.
      For each setting, we first picked up 10 instances that OTIMAPP solutions were found by PP$^+$.
      For each instance and approach, we then performed 50 trials while changing the random seed.
      Thus, the scores are means on 500 executions, accompanied with 95\% confidence intervals.
      \emph{upper}:~Results of changing $\bar{p}$ while fixing $|A|$.
      \emph{lower}:~Results of changing $|A|$ while fixing $\bar{p}$.
      Note that the probability that someone delays increases with more agents.
    }
    \label{table:result-delay}
  \end{table}
}

{
  \newcommand{\colwidth}{0.49\linewidth}
  \setlength{\tabcolsep}{1pt}
  \begin{figure}[t]
    \centering
    \begin{tabular}{cc}
      \begin{minipage}{\colwidth}
        \centering
        \includegraphics[width=1\hsize]{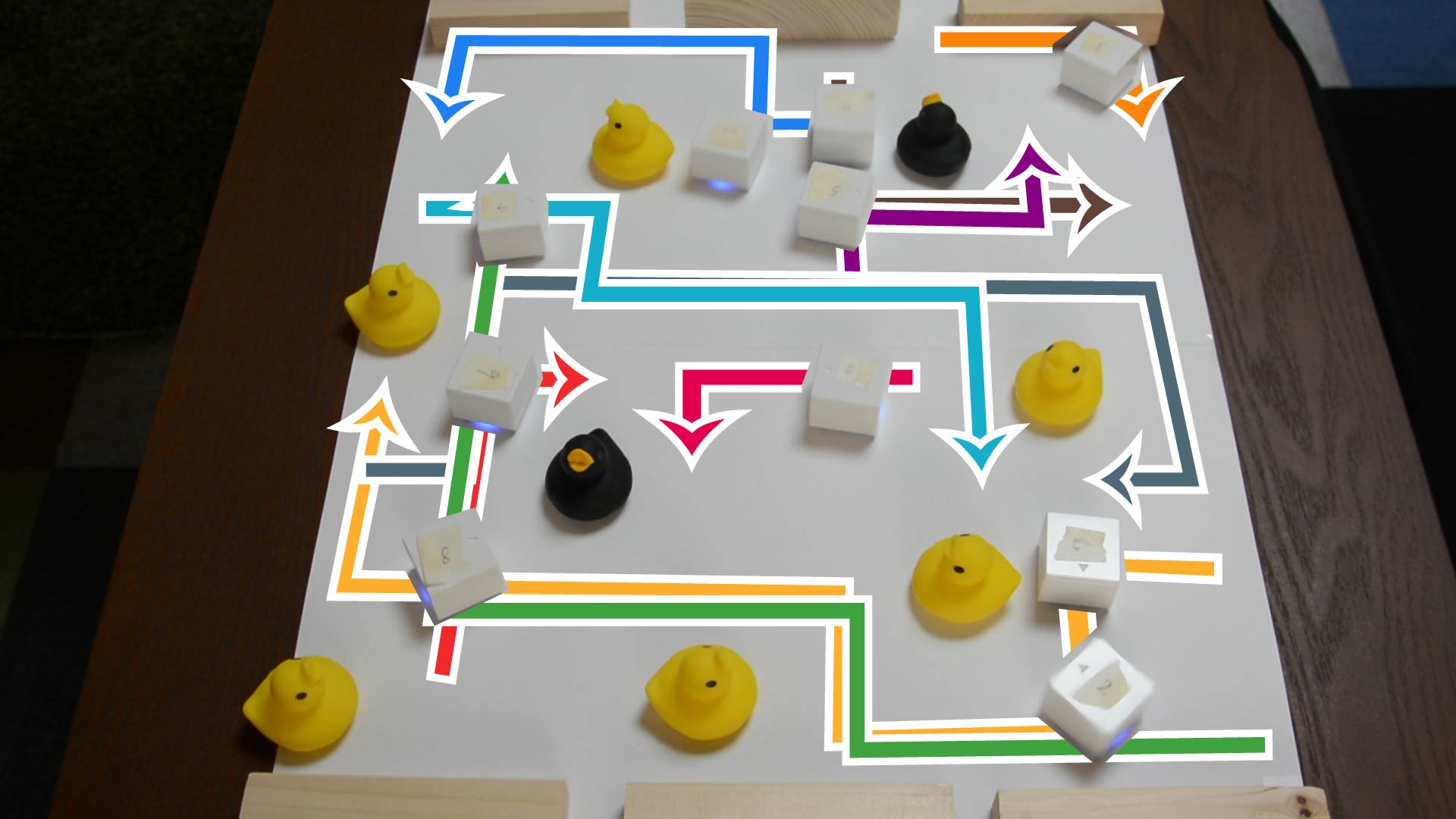}
      \end{minipage}
      &
      \begin{minipage}{\colwidth}
        \centering
        \includegraphics[width=1\hsize]{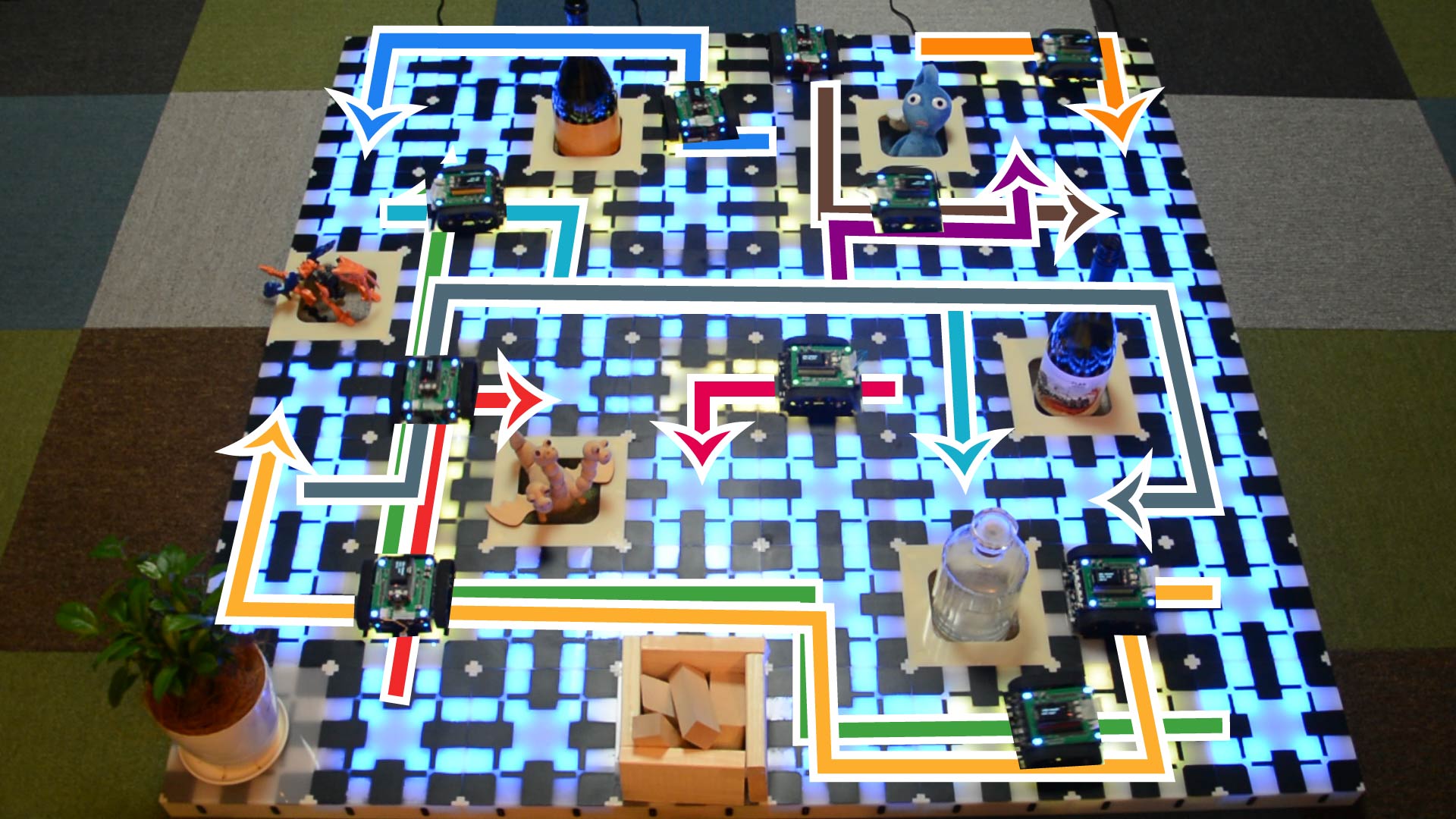}
      \end{minipage}
    \end{tabular}
    \caption{
      An OTIMAPP execution with $10$~robots in an $8\times 8$ grid.
      Colored arrows represent an OTIMAPP solution.
    }
    \label{fig:demo}
  \end{figure}
}

\subsection{Robot Demonstrations}
\label{sec:demo}
We present two OTIMAPP execution styles:
(1)~a \emph{centralized} control using the \emph{toio} robots (\url{https://toio.io}) and
(2)~a \emph{decentralized} one with only local interactions using a multi-robot platform~\cite{kameyama2021active}.
A solution was obtained by DBS.
Figure~\ref{fig:demo} is snapshots.
A video is available online.
In both cases, robots move without any synchronization procedures but are ensured to eventually reach their goals thanks to the nature of OTIMAPP.
Moreover, for the latter, any actor has no methods to know the entire configuration, which cannot be addressed by conventional execution strategies.

\section{Related Work}
A \emph{deadlock}~\cite{coffman1971system} is a widely recognized phenomenon not limited to robotics;
a system state that several components claim resources that others hold, then block each other permanently.
Strategies to cope with deadlocks are categorized into prevention, detection/recovery, and avoidance~\cite{silberschatz2006operating,fanti2004deadlock};
\emph{OTIMAPP is prevention}.
A non-deadlock state that is ``inevitable'' to reach deadlocks is called \emph{unsafe}~\cite{silberschatz2006operating}.
Meanwhile, reachable deadlocks of OTIMAPP correspond to states that are ``possible'' to reach deadlocks.
The notion of potential terminal deadlock is related to well-formed instances of MAPF~\cite{vcap2015prioritized}, that is, for each start-goal pair, a path exists that traverses no other starts and goals.
The notion of reachable cyclic deadlock is mentioned as nonlive states/sets for deadlock management in automated manufacturing systems~\cite{fanti2004deadlock} or in a multi-robot scheduling problem~\cite{mannucci2021provably}.

The \emph{multi-agent pathfinding (MAPF)} problem~\cite{stern2019def} aims at finding a set of collision-free paths on a graph.
Many studies on MAPF consider timing uncertainties because they are inevitable in multi-agent scenarios.
However, current methods largely rely on additional assumptions on the travel speed of agents or assume delays to follow some probability distributions~\cite{wagner2017path,mansouri2019multi,peltzer2020stt,atzmon2020probabilistic}.
Failing to represent the inherent uncertainty in the domain means the system behavior can be unpredictable.
Alternative approaches are online intervention during execution, e.g., forcing agents to preserve temporal dependencies of offline planning via communication~\cite{ma2017multi,honig2019persistent,atzmon2020robust}.
Another direction is online time-independent planning~\cite{okumura2021time} that incrementally moves agents based on current situations.
OTIMAPP shares the concept of time independence but aims at offline planning without or less runtime effort.

In graph theory, the (vertex) disjoint path problem and its variants~\cite{robertson1985disjoint} are partly related to ours in the sense that a set of disjoint paths clearly satisfies the solution condition of OTIMAPP, but the reverse does not.

\section{Conclusion}
This paper studied a novel path planning problem called OTIMAPP, motivated by the nature of distributed environments (i.e., timing uncertainties) that multi-agent systems must address.
We focused on robotic applications in evaluation but believe that OTIMAPP can be leveraged to other resource allocation problems with mutual exclusion, e.g, distributed databases, which is our future direction.
\section*{Acknowledgments}
We thank the anonymous reviewers for their many insightful comments and suggestions.
This work was partly supported by JSPS KAKENHI Grant Numbers~20J23011, 21K11748, and 21H03423.
Keisuke Okumura thanks the support of the Yoshida Scholarship Foundation.

\bibliographystyle{named}
\bibliography{ref}

\appendix

\setcounter{theorem}{0}

\section*{Appendix}
We complement omitted proofs~(Sec.~\ref{sec:proof:solution}, \ref{sec:proof:complexity}, and \ref{sec:proof:dbs}), the detailed procedure of detecting potential cyclic deadlocks (Sec.~\ref{sec:detection-details}), additional results of stress test on random graphs (Sec.~\ref{sec:stress-test-random}), and the details of experimental setups (Sec.~\ref{sec:exp-detail}).

\section{Proof of Solution Analysis}
\label{sec:proof:solution}

\begin{theorem*}[\ref{thrm:necessary-sufficient}; necessary and sufficient condition]
  Given an OTIMAPP instance, a set of path $\{ \path{1}, \ldots, \path{N} \}$ is a feasible solution if and only if there are:
  \begin{itemize}
    \setlength{\itemsep}{0pt}
    \item No reachable terminal deadlocks.
    \item No reachable cyclic deadlocks.
  \end{itemize}
\end{theorem*}
\begin{proof}
  Without ``no reachable terminal deadlocks,'' there is an execution that one agent arrives at its goal and remains there; disturbing the progression of another agent.
  Without  ``no reachable cyclic deadlocks,'' a cyclic deadlock might occur and those agents stop the progression.
  Hence those two are necessary.

  We now prove that the two conditions are sufficient. Given a solution candidate $\{ \path{1}, \ldots, \path{N} \}$ with no reachable deadlocks, consider the potential function $\phi \defeq \sum_{i \in A} (|\path{i}| - \clock_i)$ defined over a configuration $\{ \clock_1, \ldots, \clock_N \}$.
  Observe that $\phi$ is non-increasing and $\phi=0$ means that all agents have reached their goals.
  Furthermore, when $\phi > 0$, $\phi$ is guaranteed to decrease if each agent is activated at least once.
  We explain this as follows.

  Suppose contrary that $\phi (\neq 0)$ does not differ for the period.
  Since $\phi \neq 0$, there are agents whose progress indexes are less than the maximum values.
  Let them $B \subseteq A$.
  For an agent $i \in B$, \loc{i}{\clock_i+1} is occupied by another agent $j \in B$, according to ``no reachable terminal deadlocks,'' otherwise, $i$ moves there.
  This is the same for $j$, i.e., there is an agent $k \in B$ such that $\loc{j}{\clock_j+1} = \loc{k}{\clock_k}$.
  By induction, this sequence of agents must form a cycle somewhere, i.e., occurring a cyclic deadlock; however, this contradicts ``no reachable cyclic deadlocks.''

  Each agent is activated at least once in a sufficiently long period due to the fair assumption, deriving the statement.
\end{proof}

\section{Proofs of Computational Complexity}
\label{sec:proof:complexity}

\begin{theorem*}[\ref{thrm:np-hard-undirected}; complexity on undirected graphs]
  For OTIMAPP on \emph{undirected} graphs, it is NP-hard to find a feasible solution with \emph{simple} paths.
\end{theorem*}
\begin{proof}
  We add a new gadget, which makes an undirected edge to a virtually directed one, to the proof of the NP-hardness on digraphs (Thm.~\ref{thrm:np-hard-directed}).
  We derive the claim by replacing all directed edges, except for bidirectional edges, with this gadget.

  Figure~\ref{fig:3-sat-finding} (right) is it, including two new agents: $z_1$ and $z_2$.
  In this gadget, any agents outside of the gadget are allowed to move only the direction from $u$ to $v$.
  Assume contrary that, one agent ($\neq z_1, z_2$) takes a path through $v$ to $u$ within this gadget, and there is another path from $v$ to $u$.
  To avoid cyclic deadlocks, $z_1$ and $z_2$ must move toward the left side, exit the gadget from $u$, use another path to enter $v$, and eventually reach their goals.
  In these paths, $z_1$ can arrive at its goal earlier than that of $z_2$, contradicting ``no reachable terminal deadlocks'' in the necessary and sufficient condition (Thm.~\ref{thrm:necessary-sufficient}).
  Therefore, these paths are invalid.

  The size of the OTIMAPP instance is still polynomial on the 3-SAT formula.
  Thus, we conclude the statement.

  Note that, if we allow non-simple paths, it might be possible for other agents to move through the way from $v$ to $u$.
  This is because, even though $z_1$ and $z_2$ temporarily leave from the gadget via $u$, we can construct paths that $z_1$ always arrive at its goal after $z_2$'s arrival, as illustrated in Fig.~\ref{fig:undirected-counterexample}.
\end{proof}

\begin{figure}[t]
  \centering
  \begin{tikzpicture}
    \newcommand{\edgesize}{0.3cm}
    \newcommand{\edgesizehalf}{0.15cm}
    \newcommand{\edgesizethreetwo}{0.35cm}
    \newcommand{\longedgesize}{1.2cm}
    \tikzset{
      satnode/.style={vertex, minimum size=0.15cm, fill=black},
      start-node/.style={vertex, minimum size=0.33cm,rectangle},
      goal-node/.style={vertex, minimum size=0.33cm}
    }
    \tiny
    {
      \node[satnode, label=above:{\small $u$}](v1) at (0.0, 0.0) {};
      \node[satnode, right=\edgesize of v1](v2) {};
      \node[start-node, right=\edgesize of v2](v3) {$z_2$};
      \node[satnode, right=\edgesize of v3](v4) {};
      \node[satnode, right=\edgesize of v4,label=above:{\small $v$}](v5) {};
      \node[start-node, above=\edgesize of v2](v6) {$z_1$};
      \node[goal-node, below=\edgesize of v4](v7) {$z_1$};
      \node[goal-node, below=\edgesize of v7](v8) {$z_2$};
      \foreach \u / \v in {v1/v2,v2/v3,v3/v4,v4/v5,v6/v2,v4/v7,v7/v8}
      \draw[line] (\u) -- (\v);
      \draw[line, densely dotted] (v1) -- ($(v1) + (-0.5, 0.0)$);
      \draw[line, densely dotted] (v5) -- ($(v5) + (0.5, 0.0)$);
      \node[start-node, label=right:{start}](label-s) at (3.0, -0.4){};
      \node[goal-node, below=0.1cm of label-s, label=right:{goal}](label-g) {};
    }
    {
      \draw[line,->,orange]
      ($(v6)+(-0.1,-0.1)$) --
      ($(v2)+(-0.1, 0.1)$) --
      ($(v1)+(-0.3, 0.1)$) --
      ($(v1)+(-0.3, -1.8)$) --
      ($(v5)+(0.3, -1.8)$) --
      ($(v5)+(0.3, 0.05)$) --
      ($(v1)+(-0.25, 0.05)$) --
      ($(v1)+(-0.25, -1.75)$) --
      ($(v5)+(0.25, -1.75)$) --
      ($(v5)+(0.25, -0.05)$) --
      ($(v4)+(-0.05, -0.05)$) --
      ($(v7)+(-0.05, 0.2)$);
      \draw[line,->,cyan]
      ($(v3)+(-0.05,-0.1)$) --
      ($(v1)+(-0.2, -0.1)$) --
      ($(v1)+(-0.2, -1.7)$) --
      ($(v5)+(0.2, -1.7)$) --
      ($(v5)+(0.2, -0.1)$) --
      ($(v4)+(0.05, -0.1)$) --
      ($(v8)+(0.05, 0.2)$);
    }
    {
      \node[start-node, above=\edgesize of v2](v6) {$z_1$};
      \node[start-node, right=\edgesize of v2](v3) {$z_2$};
      \node[goal-node, below=\edgesize of v4](v7) {$z_1$};
    }
  \end{tikzpicture}
  \caption{
    Counterexample of \textit{oneway constrainer} without assmptions of simple paths.
    $z_2$ always arrives at its goal before the arrival of $z_1$ if those two agents follow colored lines.
  }
  \label{fig:undirected-counterexample}
\end{figure}
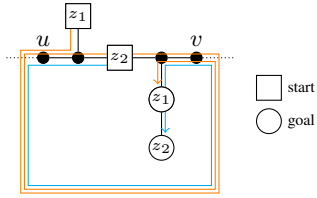
\begin{figure*}[t]
  \centering
  \begin{tikzpicture}
    \newcommand{\edgesize}{1.6cm}
    \newcommand{\hedgex}{0.75}
    \newcommand{\hedgey}{0.6}
    \tikzset{
      sat-node/.style={vertex, minimum size=0.15cm, fill=black, anchor=center},
      start-node/.style={vertex, minimum size=0.45cm, text width=0.45cm, rectangle},
      goal-node/.style={vertex, minimum size=0.5cm},
      line-c1/.style={orange,->},
      line-c2/.style={blue,->},
      line-c3/.style={cyan,->},
      bold-line/.style={very thick},
    }
    \tiny
    \node[sat-node] (v1) at (0, 0) {};
    \node[sat-node, right=\edgesize of v1.center](v2) {};
    \node[sat-node, right=\edgesize of v2.center](v3) {};
    \node[sat-node, right=\edgesize of v3.center](v4) {};
    \node[sat-node, right=\edgesize of v4.center](v5) {};
    \node[sat-node, right=\edgesize of v5.center](v6) {};
    \node[start-node, right=\edgesize of v6.center](v7) {$z$};
    \node[sat-node](v8) at ($(v1)+(\hedgex, \hedgey)$) {};
    \node[sat-node](v9) at ($(v2)+(\hedgex, \hedgey)$) {};
    \node[sat-node](v10) at ($(v3)+(\hedgex, -\hedgey)$) {};
    \node[start-node](v11) at ($(v2.center) + (0,  1.6)$) {$c^1_2$};
    \node[start-node](v12) at ($(v2.center) + (0, -1.6)$) {$c^3_2$};
    \node[start-node, above=0.77cm of v9.center](v17) {$c^2_2$};
    \node[sat-node](v13) at ($(v9) + ( 0.2,0.4)$) {};
    \node[sat-node](v14) at ($(v3) + (-0.1,-1.1)$) {};
    \node[sat-node](v19) at ($(v3) + (-0.1, 1.2)$){};
    \node[goal-node, above right=1.3cm of v4.center](v15) {$c^1_2$};
    \node[goal-node, above right=1.3cm of v5.center](v18) {$c^2_2$};
    \node[goal-node, above right=1.3cm of v6.center](v16) {$c^3_2$};
    \node[goal-node, left=0cm of v18](v20) {$c^2_1$};
    \node[goal-node, right=0cm of v18](v21) {$c^2_3$};
    \node[goal-node](v22) at ($(v1) + (-0.2, -0.8)$) {$z$};
    \draw[line,->] (v1) -- (v22);
    \coordinate[ left=0.5cm of v1, label=left:{\small$\clubsuit$}](circle-c1);
    \coordinate[right=0.5cm of v7,label=right:{\small$\clubsuit$}](circle-c2);
    %
    {
      \draw[line,->,bold-line] (v1) to[out=60,in=180] (v8);
      \draw[line,->,bold-line] (v8) to[out=0,in=120] (v2);
      \draw[line-c1,bold-line] (v2) to[out=60,in=180] (v9);
      \draw[line-c3,bold-line] (v9) to[out=0,in=120] (v3);
      \draw[line,->,bold-line] (v3) to[out=-60,in=180] (v10);
      \draw[line,->,bold-line] (v10) to[out=0,in=-120] (v4);
      \draw[line,->,bold-line] (v1) to[out=-60,in=-120] (v2);
      \draw[line-c2,bold-line] (v2) to[out=-60,in=-120] (v3);
      \draw[line,->,bold-line] (v3) to[out=60,in=120] (v4);
    }
    {
      \draw[line,->] (v4) to[out=60,in=120] (v5);
      \draw[line-c1] (v4) -- (v5);
      \draw[line,->] (v4) to[out=-60,in=-120] (v5);
      \draw[line,->] (v5) to[out=60,in=120] (v6);
      \draw[line-c3] (v5) -- (v6);
      \draw[line,->] (v5) to[out=-60,in=-120] (v6);
      \draw[line,->] (v6) to[out=60,in=120] (v7);
      \draw[line-c2] (v6) -- (v7);
      \draw[line,->] (v6) to[out=-60,in=-120] (v7);
    }
    {
      \draw[line-c1] (v11) -- (v2);
      \draw[line-c1] (v5)  to[out=100,in=-50] (v15);
      \draw[line-c2] (v12) -- (v2);
      \draw[line-c2] (v7.north) -- (v16);
      \draw[line-c3] (v17) -- (v9);
      \draw[line-c3] (v6) to[out=100,in=-50] (v18);
      \draw[line,->] (v6) to[out=110,in=-40] (v20);
      \draw[line,->] (v6)  -- (v21);
    }
    {
      \draw[line-c1] (v9) -- (v13);
      \draw[line-c1] (v13) to[out=0,in=110] (v4);
      \draw[line-c2] (v3) -- (v14);
      \draw[line-c2] (v14) to[out=-10,in=-110] (v6);
      \draw[line-c3] (v3) -- (v19);
      \draw[line-c3] (v19) to[out=5,in=120] (v5);
    }
    {
      \draw[line,->] (circle-c1.east) -- (v1);
      \draw[line,->] (v7) -- (circle-c2.west);
   }
    {
     \node[rectangle,anchor=north west,
     minimum height=3.8cm,text width=2.0cm, draw,densely dotted,
     label=below:{\textit{variable decider}: $x_2$}] at (1.35, 1.9) {};
     \node[rectangle,anchor=north west,
     minimum height=2.3cm,text width=1.95cm, draw,densely dotted,
     label=above:{\textit{clause constrainer}:~$C^2$}] at (6.55, 1.5) {};
     \node[above right=0.5cm of v5,anchor=south] {$c^2_1$};
     \node[right=0.7cm of v5,anchor=south] {$c^2_2$};
     \node[below right=0.45cm of v5,anchor=north] {$c^2_3$};
     \node[anchor=west] at (1.9,  0.3) {true (upper)};
     \node[anchor=west] at (1.9, -0.2) {false (lower)};
     \node[start-node, label=right:{start}](label-s) at (10.5, -1.1){};
     \node[goal-node, below=0.1cm of label-s, label=right:{goal}](label-g) {};
     \node[](vacation) at (4.5, 2.0) {\textit{vacation vertex}};
     \draw[densely dashed,->] (vacation.west) -- ($(v19)+(0.1,0.1)$);
     \draw[densely dashed,->] (vacation.west) -- ($(v13)+(0.1,0.1)$);
   }
  \end{tikzpicture}
  \caption{OTIMAPP instance and solution reduced from the 3-SAT instance $(x_1 \lor x_2 \lor \lnot x_3) \land (\lnot x_1 \lor x_2 \lor x_3) \land (x_1 \lor \lnot x_2 \land \lnot x3)$.
    For visualization, we break a large circle; regard two $\clubsuit$ marks as connected.
    Omitted vertices and edges are complemented in Fig.~\ref{fig:3-sat-deadlocks-detail}.
  }
  \label{fig:3-sat-deadlocks}
\end{figure*}
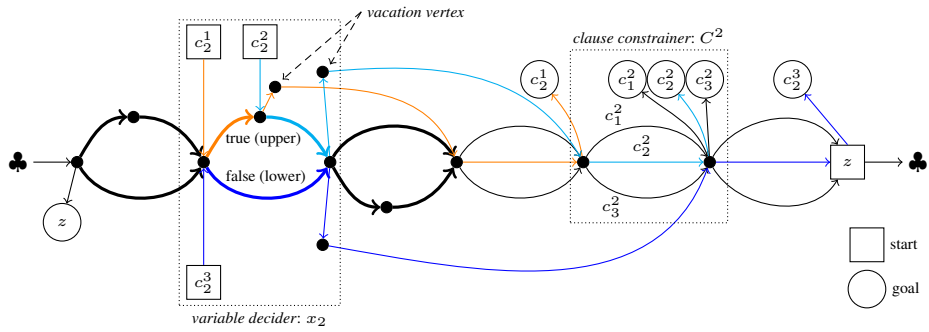
\begin{figure*}[t]
  \centering
  \begin{tikzpicture}
    \newcommand{\edgesize}{1.6cm}
    \newcommand{\hedgex}{0.75}
    \newcommand{\hedgey}{0.6}
    \tikzset{
      sat-node/.style={vertex, minimum size=0.15cm, fill=black, anchor=center},
      start-node/.style={vertex, minimum size=0.45cm, text width=0.45cm, rectangle},
      goal-node/.style={vertex, minimum size=0.5cm},
      line-c1/.style={orange,->,thick},
      line-c2/.style={blue,->,thick},
      line-c3/.style={cyan,->,thick},
      line-c4/.style={red,->,thick},
      line-c5/.style={teal,->,thick},
      line-c6/.style={olive,->,thick},
      line-c7/.style={magenta,->,thick},
      line-c8/.style={lightgray,->,thick},
      line-c9/.style={violet,->,thick},
    }
    \tiny
    \node[sat-node] (v1) at (0, 0) {};
    \node[sat-node, right=\edgesize of v1.center](v2) {};
    \node[sat-node, right=\edgesize of v2.center](v3) {};
    \node[sat-node, right=\edgesize of v3.center](v4) {};
    \node[sat-node, right=\edgesize of v4.center](v5) {};
    \node[sat-node, right=\edgesize of v5.center](v6) {};
    \node[start-node, right=\edgesize of v6.center](v7) {$z$};
    \node[sat-node](v8) at ($(v1)+(\hedgex, \hedgey)$) {};
    \node[sat-node](v9) at ($(v2)+(\hedgex, \hedgey)$) {};
    \node[sat-node](v10) at ($(v3)+(\hedgex, -\hedgey)$) {};
    \node[start-node](v11) at ($(v2.center) + (0,  2.2)$) {$c^1_2$};
    \node[start-node](v12) at ($(v2.center) + (0, -2.2)$) {$c^3_2$};
    \node[start-node](v17) at ($(v9.center) + (0, 1.6)$){$c^2_2$};
    \node[sat-node](v13) at ($(v9) + ( 0.2,0.4)$) {};
    \node[sat-node](v14) at ($(v3) + (-0.2,-1.1)$) {};
    \node[sat-node](v19) at ($(v3) + (-0.2, 1.1)$){};
    \node[sat-node](v33) at ($(v2) + (-0.2, 1.1)$){};
    \node[sat-node](v34) at ($(v8) + ( 0.2, 0.4)$){};
    \node[sat-node](v35) at ($(v2) + (-0.2, -1.1)$){};
    \node[sat-node](v36) at ($(v4) + (-0.2, -1.1)$){};
    \node[sat-node](v37) at ($(v4) + (-0.2,  0.8)$){};
    \node[sat-node](v38) at ($(v10) + ( 0.2, -0.4)$){};
    \node[goal-node](v15) at ($(v4.center) + (1.3, 2.2)$) {$c^1_2$};
    \node[goal-node, left=0cm of v15](v31) {$c^1_1$};
    \node[goal-node, right=0cm of v15](v32) {$c^1_3$};
    \node[goal-node](v16) at ($(v6.center) + (1.3, 2.2)$) {$c^3_2$};
    \node[goal-node, left=0cm of v16](v29) {$c^3_1$};
    \node[goal-node, right=0cm of v16](v30) {$c^3_3$};
    \node[goal-node](v18) at ($(v5.center) + (1.3, 2.2)$) {$c^2_2$};
    \node[goal-node, left=0cm of v18](v20) {$c^2_1$};
    \node[goal-node, right=0cm of v18](v21) {$c^2_3$};
    \node[goal-node](v22) at ($(v1) + (-0.4, -0.8)$) {$z$};
    \draw[line,->] (v1) -- (v22);
    \coordinate[ left=0.5cm of v1, label=left:{\small$\clubsuit$}](circle-c1);
    \coordinate[right=0.5cm of v7,label=right:{\small$\clubsuit$}](circle-c2);
    %
    \node[start-node](v23) at ($(v1.center) + (0, 2.2)$) {$c^1_1$};
    \node[start-node](v24) at ($(v1.center) + (0, -2.2)$) {$c^2_1$};
    \node[start-node](v25) at ($(v8.center) + (0, 1.6)$) {$c^3_1$};
    \node[start-node](v26) at ($(v3.center) + (0, -2.2)$) {$c^1_3$};
    \node[start-node](v27) at ($(v3.center) + (0, 2.2)$) {$c^2_3$};
    \node[start-node](v28) at ($(v10.center) + (0, -1.6)$) {$c^3_3$};
    %

    {
      \draw[line-c4] (v1) to[out=60,in=180] (v8);
      \draw[line-c6] (v8) to[out=0,in=120] (v2);
      \draw[line-c1] (v2) to[out=60,in=180] (v9);
      \draw[line-c3] (v9) to[out=0,in=120] (v3);
      \draw[line-c7] (v3) to[out=-60,in=180] (v10);
      \draw[line-c9] (v10) to[out=0,in=-120] (v4);
      \draw[line-c5] (v1) to[out=-60,in=-120] (v2);
      \draw[line-c2] (v2) to[out=-60,in=-120] (v3);
      \draw[line-c8] (v3) to[out=60,in=120] (v4);
    }
    {
      \draw[line-c4] (v4) to[out=60,in=120] (v5);
      \draw[line-c1] (v4) -- (v5);
      \draw[line-c7] (v4) to[out=-60,in=-120] (v5);
      \draw[line-c5] (v5) to[out=60,in=120] (v6);
      \draw[line-c3] (v5) -- (v6);
      \draw[line-c8] (v5) to[out=-60,in=-120] (v6);
      \draw[line-c6] (v6) to[out=60,in=120] (v7);
      \draw[line-c2] (v6) -- (v7);
      \draw[line-c9] (v6) to[out=-60,in=-120] (v7);
    }
    {
      \draw[line-c1] (v11) -- (v2);
      \draw[line-c1] (v5) -- (v15);
      \draw[line-c2] (v12) -- (v2);
      \draw[line-c2] (v7.north) -- (v16);
      \draw[line-c3] (v17) -- (v9);
      \draw[line-c3] (v6) -- (v18);
      \draw[line-c5] (v6) -- (v20);
      \draw[line-c8] (v6) -- (v21);
      \draw[line-c4] (v5) -- (v31);
      \draw[line-c7] (v5) -- (v32);
      \draw[line-c6] (v7.north) -- (v29);
      \draw[line-c9] (v7.north) -- (v30);
    }
    {
      \draw[line-c1] (v9) -- (v13);
      \draw[line-c1] (v13) to[out=0,in=110] (v4);
      \draw[line-c2] (v3) -- (v14);
      \draw[line-c2] (v14) to[out=-10,in=-110] (v6);
      \draw[line-c3] (v3) -- (v19);
      \draw[line-c3] (v19) to[out=5,in=120] (v5);
      \draw[line-c4] (v34) to[out=-18,in=115] (v4);
      \draw[line-c5] (v35) to[out=-10,in=-110] (v5);
      \draw[line-c6] (v33) to[out=0,in=110] (v6);
      \draw[line-c7] (v38) to[out=0,in=-110] (v4);
      \draw[line-c8] (v37) to[out=0,in=110] (v5);
      \draw[line-c9] (v36) to[out=0,in=-110] (v6);
    }
    {
      \draw[line,->] (circle-c1.east) -- (v1);
      \draw[line,->] (v7) -- (circle-c2.west);
    }
    {
      \draw[line-c4] (v23) -- (v1);
      \draw[line-c5] (v24) -- (v1);
      \draw[line-c6] (v25) -- (v8);
      \draw[line-c7] (v26) -- (v3);
      \draw[line-c8] (v27) -- (v3);
      \draw[line-c9] (v28) -- (v10);
      \draw[line-c6] (v2) -- (v33);
      \draw[line-c4] (v8) -- (v34);
      \draw[line-c5] (v2) -- (v35);
      \draw[line-c9] (v4) -- (v36);
      \draw[line-c8] (v4) -- (v37);
      \draw[line-c7] (v10) -- (v38);
    }

  \end{tikzpicture}
  \caption{Complete version of Fig.~\ref{fig:3-sat-deadlocks} for $(x_1 \lor x_2 \lor \lnot x_3) \land (\lnot x_1 \lor x_2 \lor x_3) \land (x_1 \lor \lnot x_2 \land \lnot x3)$.
    Each color corresponds to a path for each agent.
  }
  \label{fig:3-sat-deadlocks-detail}
\end{figure*}

\begin{lemma*}[\ref{lemma:deadlock-np-comp}; complexity of detecting cyclic deadlocks]
  Determining whether a set of paths contains either reachable or potential cyclic deadlocks is NP-complete.
\end{lemma*}
\begin{proof}
  The proof is a reduction from the 3-SAT problem, i.e., constructing a combination of an OTIMAPP instance and a set of paths such that potential cyclic deadlocks exist if and only if the corresponding formula is satisfiable.
  We show the case of directed graphs.
  The proof procedure applies to the undirected case without modifications.
  In addition, all potential cyclic deadlocks are reachable in the translated problem.
  The reduction is done in polynomial time, deriving the NP-hardness of detecting both reachable and potential cyclic deadlocks.
  Since a potential cyclic deadlock can be verified in polynomial time, and since a reachable cyclic deadlock can be verified in polynomial time with an execution schedule, they are NP-complete.

  We now explain how to translate the 3-SAT formula to the OTIMAPP instance and the corresponding set of paths.
  Without loss of generality, we assume that all variables appear positively and negatively in the formula.
  Throughout the proof, we use the following example.
  \begin{align*}
    (x_1 \lor x_2 \lor \lnot x_3) \land (\lnot x_1 \lor x_2 \lor x_3) \land (x_1 \lor \lnot x_2 \land \lnot x_3)
  \end{align*}
  Its outcome is partially depicted in Fig.~\ref{fig:3-sat-deadlocks}.
  The complete version is presented in Fig.~\ref{fig:3-sat-deadlocks-detail}.

\medskip
\noindent
\emph{A. Construction of an OTIMAPP instance and a set of paths}
For each literal in each clause, one \emph{literal agent} is introduced.
We denote by $c^j_k$ a literal agent for the $k$-th literal in $j$-th clause $C^j$ in the formula.
We also use one special agent $z$.

Next, consider two gadgets: \emph{variable decider} and \emph{clause constrainer}.
Note that they are different from those used in the proof of Thm.~\ref{thrm:np-hard-directed};
however, their intuitions are similar and we use the same names.

The variable decider determines whether a variable $x_i$ occurs positively or negatively.
For each variable one gadget is introduced.
All literal agents for $x_i$ (i.e., either $x_i$ or $\lnot x_i$) start from vertices in this gadget.
The gadget contains two paths: an \emph{upper} path, corresponding to assign true to $x_i$, and a \emph{lower} path, corresponding to assign false to $x_i$.
Positive literals are connected to the upper path.
Negative literals are connected to the lower path.
For instance, $x_2$ has three literal agents: $c^1_2$ ($x_2$), $c^2_2$ ($x_2$), and $c^3_2$ ($\lnot~x_2$).
In Fig.~\ref{fig:3-sat-deadlocks}, we highlight the upper and the lower paths by bold lines.
$c^1_2$ and $c^2_2$ are connected to the upper path while $c^3_2$ is connected to the lower path.
Each literal agent uses one edge in the upper/lower path and moves to a clause constrainer via one \emph{vacation} vertex.

The clause constrainer contains all goals of the literal agents in the clause.
Three edges are used to reach the goals.
Each edge is for each literal agent.
For instance, the clause constrainer of $C^2$ contains the goals of $c^2_1$, $c^2_2$, and $c^2_3$.
In Fig.~\ref{fig:3-sat-deadlocks}, three edges are annotated with the agent's name.
$c_2^2$ is supposed to use the colored middle one.
Note that we use multiple edges for simplicity.
It is not hard to convert the gadget to a simple graph version, as shown immediately later of this proof.

As a result, all literal agents take six edges to reach their goals.
This is visualized by colored edges in Fig.~\ref{fig:3-sat-deadlocks} and Fig.~\ref{fig:3-sat-deadlocks-detail}.
The special agent~$z$ uses two edges to reach its goal, through $\clubsuit$ marks in the figure.
We finish the description of how to construct the OTIMAPP instance and the corresponding set of paths.
The remaining part is to show these paths contain potential/reachable cyclic deadlocks if and only if the formula is satisfiable.
This translation from the formula is clearly done in polynomial time.

\newcommand{\true}{\m{\mathit{true}}}
\newcommand{\false}{\m{\mathit{false}}}

\medskip
\noindent
\emph{B. A potential cyclic deadlock exists if the formula is satisfiable.}
To see this, observe that if a potential cyclic deadlock exists, agents must try to use;
(a)~either an upper or a lower path for each variable decider,
(b)~one edge for each clause constrainer, and
(c)~the edge for $z$ ($\clubsuit$).

{
\newcommand{\edgesize}{1.6cm}
\newcommand{\hedgex}{0.75}
\newcommand{\hedgey}{0.6}
\newcommand{\body}{
  \tikzset{
    sat-node/.style={vertex, minimum size=0.15cm, fill=black, anchor=center},
    start-node/.style={vertex, minimum size=0.45cm, text width=0.45cm, rectangle},
    goal-node/.style={vertex, minimum size=0.5cm},
    line-c1/.style={orange,->},
    line-c2/.style={blue,->},
    line-c3/.style={cyan,->},
    line-c4/.style={red,->},
    line-c5/.style={teal,->},
    line-c6/.style={olive,->},
    line-c7/.style={magenta,->},
    line-c8/.style={lightgray,->},
    line-c9/.style={violet,->},
    line-c10/.style={pink,->},
  }
  \tikzset{
    agent/.style={start-node},
    left-agent/.style={start-node,fill=black,fill opacity=0.6},
  }
    \tiny
    \node[sat-node] (v1) at (0, 0) {};
    \node[sat-node, right=\edgesize of v1.center](v2) {};
    \node[sat-node, right=\edgesize of v2.center](v3) {};
    \node[sat-node, right=\edgesize of v3.center](v4) {};
    \node[sat-node, right=\edgesize of v4.center](v5) {};
    \node[sat-node, right=\edgesize of v5.center](v6) {};
    \node[agent,line-c10,text=white,right=\edgesize of v6.center](v7) {$z$};
    \node[sat-node](v8) at ($(v1)+(\hedgex, \hedgey)$) {};
    \node[sat-node](v9) at ($(v2)+(\hedgex, \hedgey)$) {};
    \node[sat-node](v10) at ($(v3)+(\hedgex, -\hedgey)$) {};
    \node[start-node](v11) at ($(v2.center) + (0,  2.2)$) {$c^1_2$};
    \node[start-node](v12) at ($(v2.center) + (0, -2.2)$) {$c^3_2$};
    \node[start-node](v17) at ($(v9.center) + (0, 1.6)$){$c^2_2$};
    \node[sat-node](v13) at ($(v9) + ( 0.2,0.4)$) {};
    \node[sat-node](v14) at ($(v3) + (-0.2,-1.1)$) {};
    \node[sat-node](v19) at ($(v3) + (-0.2, 1.1)$){};
    \node[sat-node](v33) at ($(v2) + (-0.2, 1.1)$){};
    \node[sat-node](v34) at ($(v8) + ( 0.2, 0.4)$){};
    \node[sat-node](v35) at ($(v2) + (-0.2, -1.1)$){};
    \node[sat-node](v36) at ($(v4) + (-0.2, -1.1)$){};
    \node[sat-node](v37) at ($(v4) + (-0.2,  0.8)$){};
    \node[sat-node](v38) at ($(v10) + ( 0.2, -0.4)$){};
    \node[goal-node](v15) at ($(v4.center) + (1.3, 2.2)$) {$c^1_2$};
    \node[goal-node, left=0cm of v15](v31) {$c^1_1$};
    \node[goal-node, right=0cm of v15](v32) {$c^1_3$};
    \node[goal-node](v16) at ($(v6.center) + (1.3, 2.2)$) {$c^3_2$};
    \node[goal-node, left=0cm of v16](v29) {$c^3_1$};
    \node[goal-node, right=0cm of v16](v30) {$c^3_3$};
    \node[goal-node](v18) at ($(v5.center) + (1.3, 2.2)$) {$c^2_2$};
    \node[goal-node, left=0cm of v18](v20) {$c^2_1$};
    \node[goal-node, right=0cm of v18](v21) {$c^2_3$};
    \node[goal-node](v22) at ($(v1) + (-0.4, -0.8)$) {$z$};
    \draw[line,->] (v1) -- (v22);
    \coordinate[ left=0.5cm of v1, label=left:{\small$\clubsuit$}](circle-c1);
    \coordinate[right=0.5cm of v7,label=right:{\small$\clubsuit$}](circle-c2);
    %
    \node[start-node](v23) at ($(v1.center) + (0, 2.2)$) {$c^1_1$};
    \node[start-node](v24) at ($(v1.center) + (0, -2.2)$) {$c^2_1$};
    \node[start-node](v25) at ($(v8.center) + (0, 1.6)$) {$c^3_1$};
    \node[start-node](v26) at ($(v3.center) + (0, -2.2)$) {$c^1_3$};
    \node[start-node](v27) at ($(v3.center) + (0, 2.2)$) {$c^2_3$};
    \node[start-node](v28) at ($(v10.center) + (0, -1.6)$) {$c^3_3$};
    %

    {
      \draw[line-c4] (v1) to[out=60,in=180] (v8);
      \draw[line-c6] (v8) to[out=0,in=120] (v2);
      \draw[line-c1] (v2) to[out=60,in=180] (v9);
      \draw[line-c3] (v9) to[out=0,in=120] (v3);
      \draw[line-c7] (v3) to[out=-60,in=180] (v10);
      \draw[line-c9] (v10) to[out=0,in=-120] (v4);
      \draw[line-c5] (v1) to[out=-60,in=-120] (v2);
      \draw[line-c2] (v2) to[out=-60,in=-120] (v3);
      \draw[line-c8] (v3) to[out=60,in=120] (v4);
    }
    {
      \draw[line-c4] (v4) to[out=60,in=120] (v5);
      \draw[line-c1] (v4) -- (v5);
      \draw[line-c7] (v4) to[out=-60,in=-120] (v5);
      \draw[line-c5] (v5) to[out=60,in=120] (v6);
      \draw[line-c3] (v5) -- (v6);
      \draw[line-c8] (v5) to[out=-60,in=-120] (v6);
      \draw[line-c6] (v6) to[out=60,in=120] (v7);
      \draw[line-c2] (v6) -- (v7);
      \draw[line-c9] (v6) to[out=-60,in=-120] (v7);
    }
    {
      \draw[line-c1] (v11) -- (v2);
      \draw[line-c1] (v5) -- (v15);
      \draw[line-c2] (v12) -- (v2);
      \draw[line-c2] (v7.north) -- (v16);
      \draw[line-c3] (v17) -- (v9);
      \draw[line-c3] (v6) -- (v18);
      \draw[line-c5] (v6) -- (v20);
      \draw[line-c8] (v6) -- (v21);
      \draw[line-c4] (v5) -- (v31);
      \draw[line-c7] (v5) -- (v32);
      \draw[line-c6] (v7.north) -- (v29);
      \draw[line-c9] (v7.north) -- (v30);
    }
    {
      \draw[line-c1] (v9) -- (v13);
      \draw[line-c1] (v13) to[out=0,in=110] (v4);
      \draw[line-c2] (v3) -- (v14);
      \draw[line-c2] (v14) to[out=-10,in=-110] (v6);
      \draw[line-c3] (v3) -- (v19);
      \draw[line-c3] (v19) to[out=5,in=120] (v5);
      \draw[line-c4] (v34) to[out=-18,in=115] (v4);
      \draw[line-c5] (v35) to[out=-10,in=-110] (v5);
      \draw[line-c6] (v33) to[out=0,in=110] (v6);
      \draw[line-c7] (v38) to[out=0,in=-110] (v4);
      \draw[line-c8] (v37) to[out=0,in=110] (v5);
      \draw[line-c9] (v36) to[out=0,in=-110] (v6);
    }
    {
      \draw[line-c10] (circle-c1.east) -- (v1);
      \draw[line-c10] (v7) -- (circle-c2.west);
    }
    {
      \draw[line-c4] (v23) -- (v1);
      \draw[line-c5] (v24) -- (v1);
      \draw[line-c6] (v25) -- (v8);
      \draw[line-c7] (v26) -- (v3);
      \draw[line-c8] (v27) -- (v3);
      \draw[line-c9] (v28) -- (v10);
      \draw[line-c6] (v2) -- (v33);
      \draw[line-c4] (v8) -- (v34);
      \draw[line-c5] (v2) -- (v35);
      \draw[line-c9] (v4) -- (v36);
      \draw[line-c8] (v4) -- (v37);
      \draw[line-c7] (v10) -- (v38);
    }
}
\newcommand{\stepone}{
  \node[agent,line-c4,text=white] at (v34) {$c^1_1$};
  \node[left-agent] at (v23) {$c^1_1$};
  \node[agent,line-c1,text=white] at (v13) {$c^1_2$};
  \node[left-agent] at (v11) {$c^1_2$};
  \node[agent,line-c3,text=white] at (v19) {$c^2_2$};
  \node[left-agent] at (v17) {$c^2_2$};
  \node[agent,line-c8,text=white] at (v37) {$c^2_3$};
  \node[left-agent] at (v27) {$c^2_3$};
  \node[agent,line-c6,text=white] at (v33) {$c^3_1$};
  \node[left-agent] at (v25) {$c^3_1$};

  \node[agent,line-c5,text=white] at (v24) {$c^2_1$};
  \node[agent,line-c2,text=white] at (v12) {$c^3_2$};
  \node[agent,line-c7,text=white] at (v26) {$c^1_3$};
  \node[agent,line-c9,text=white] at (v28) {$c^3_3$};
}
\newcommand{\steptwo}{
  \node[agent,line-c4,text=white] at (v34) {$c^1_1$};
  \node[left-agent] at (v23) {$c^1_1$};
  \node[agent,line-c1,text=white] at (v4) {$c^1_2$};
  \node[left-agent] at (v11) {$c^1_2$};
  \node[agent,line-c3,text=white] at (v5) {$c^2_2$};
  \node[left-agent] at (v17) {$c^2_2$};
  \node[agent,line-c8,text=white] at (v37) {$c^2_3$};
  \node[left-agent] at (v27) {$c^2_3$};
  \node[agent,line-c6,text=white] at (v6) {$c^3_1$};
  \node[left-agent] at (v25) {$c^3_1$};
}
\newcommand{\stepthree}{
  \node[agent,line-c5,text=white] at (v1) {$c^2_1$};
  \node[left-agent] at (v24) {$c^2_1$};
  \node[agent,line-c2,text=white] at (v2) {$c^3_2$};
  \node[left-agent] at (v12) {$c^3_2$};
  \node[agent,line-c7,text=white] at (v3) {$c^1_3$};
  \node[left-agent] at (v26) {$c^1_3$};
  \node[agent,line-c9,text=white] at (v10) {$c^3_3$};
  \node[left-agent] at (v28) {$c^3_3$};
}

\begin{figure*}[tb!]
  \centering
  \begin{tabular}{c}
    \begin{minipage}{1\linewidth}
      \centering
      \begin{tikzpicture}
        \body
        \stepone
      \end{tikzpicture}\\
      Step~1: Move assigned agents to vacation vertices
    \end{minipage}
    \bigskip\\
    \begin{minipage}{1\linewidth}
      \centering
      \begin{tikzpicture}
        \body
        \steptwo
        \node[agent,line-c5,text=white] at (v24) {$c^2_1$};
        \node[agent,line-c2,text=white] at (v12) {$c^3_2$};
        \node[agent,line-c7,text=white] at (v26) {$c^1_3$};
        \node[agent,line-c9,text=white] at (v28) {$c^3_3$};
      \end{tikzpicture}\\
      Step~2: Fill clause constrainers
    \end{minipage}
    \bigskip\\
    \begin{minipage}{1\linewidth}
      \centering
      \begin{tikzpicture}
        \body
        \draw[line-c5,ultra thick] (v1) to[out=-60,in=-120] (v2);
        \draw[line-c2,ultra thick] (v2) to[out=-60,in=-120] (v3);
        \draw[line-c7,ultra thick] (v3) to[out=-60,in=180] (v10);
        \draw[line-c9,ultra thick] (v10) to[out=0,in=-120] (v4);
        \draw[line-c1,ultra thick] (v4) -- (v5);
        \draw[line-c3,ultra thick] (v5) -- (v6);
        \draw[line-c6,ultra thick] (v6) to[out=60,in=120] (v7);
        \draw[line-c10,ultra thick] (circle-c1.east) -- (v1);
        \draw[line-c10,ultra thick] (v7) -- (circle-c2.west);
        \steptwo
        \stepthree
      \end{tikzpicture}\\
      Step~3: Move unassigned agents one step
    \end{minipage}
  \end{tabular}
  \caption{
    Construction of a reachable deadlock.
    The formula is $(x_1 \lor x_2 \lor \lnot x_3) \land (\lnot x_1 \lor x_2 \lor x_3) \land (x_1 \lor \lnot x_2 \land \lnot x3)$.
    The assignment is $x_1=\true$, $x_2=\true$, and $x_3=\true$.
    Locations of all agents are colored.
    When an agent departs from its start, the corresponding vertex is filled by dark.
    Bold lines in Step~3 constitute a reachable deadlock.
  }
  \label{fig:3-sat-deadlocks-steps}
\end{figure*}
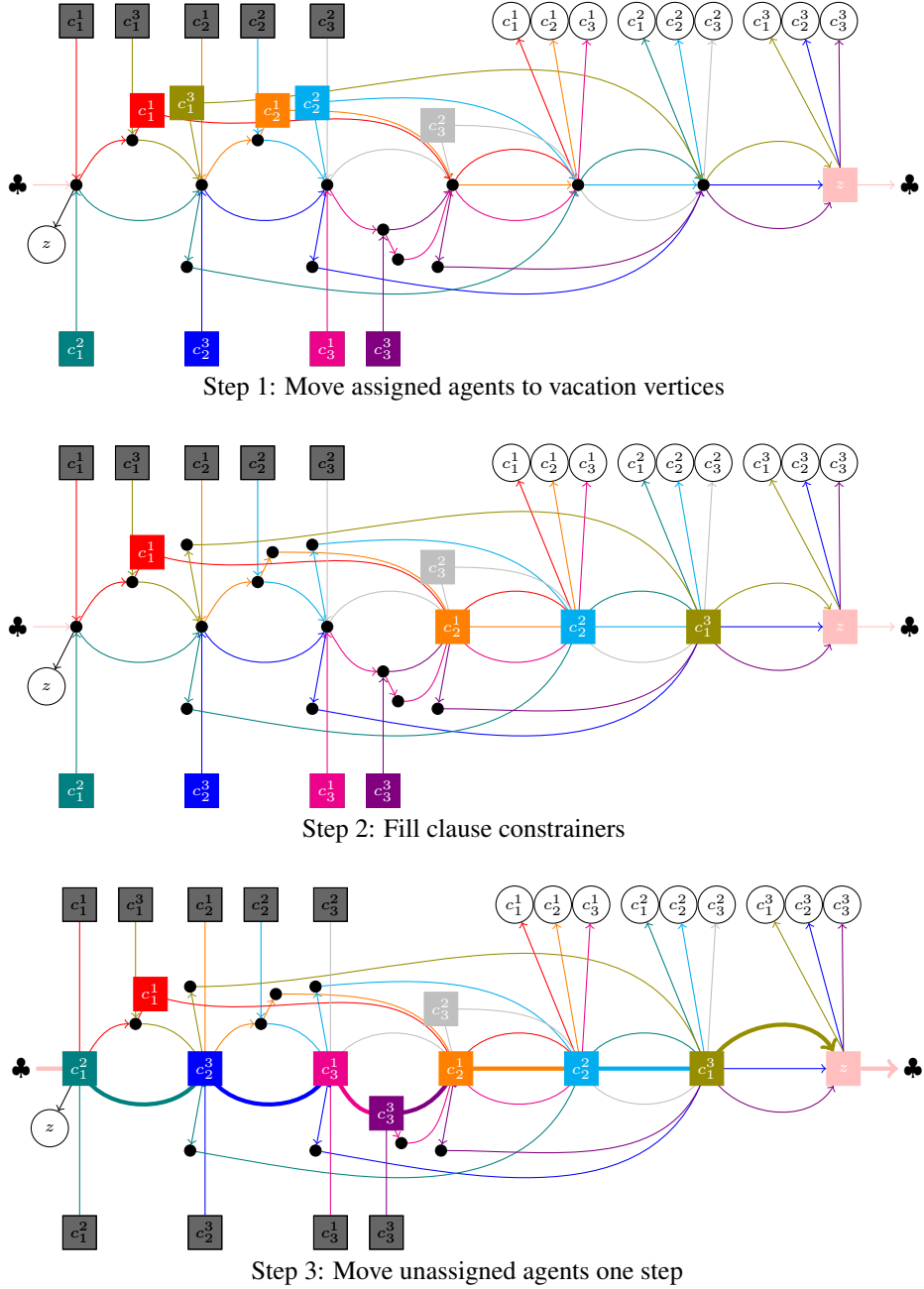
}

When the formula is satisfiable for one assignment, consider the following execution.
\begin{enumerate}
\setlength{\itemsep}{0pt}
\item For each assigned value, move the corresponding clause agents to vacation vertices in each variable decider, i.e., one step before clause constrainers.
\item Among the above agents, for each clause constrainer, there is at least one agent that can enter the clause constrainer due to satisfiability.
  Move them one step further.
  As a result, all clause constrainers have one agent at the first vertices.
  Vertices in upper/lower paths in the variable deciders must be vacant now.
\item Move all unassigned clause agents one step.
  As a result, all vertices in the unassigned paths are filled by the unassigned clause agents.
\end{enumerate}
\noindent
We now have a cyclic deadlock, i.e., this deadlock is reachable thus potential.

As an example, consider a satisfiable assignment $x_1=\true$, $x_2=\true$, $x_3=\true$.
In the beginning, move assigned agents, $c^1_1$, $c^1_2$, $c^2_2$, $c^2_3$, and $c^3_1$ to vacation vertices in each variable decider (Fig.~\ref{fig:3-sat-deadlocks-steps}; Step 1).
Next, move $c^1_2$, $c^2_2$, and $c^3_1$ to the first vertices of each clause constrainer of $C^1$, $C^2$, and $C^3$, respectively (Fig.~\ref{fig:3-sat-deadlocks-steps}; Step 2).
Then, move all unassigned agents, $c^2_1$, $c^3_2$, $c^1_3$, and $c^3_3$, one step (Fig.~\ref{fig:3-sat-deadlocks-steps}; Step 3).
There is a cyclic deadlock with $c^2_1, c^3_2, c^1_3, c^3_3, c^1_2, c^2_2, c^3_1$, and $z$, annotated with bold lines in Fig.~\ref{fig:3-sat-deadlocks-steps}.

\medskip
\noindent
\emph{C. The formula is satisfiable if a potential cyclic deadlock exists.}
To form a potential cyclic deadlock, for each variable decider, one or several agents try to move along either an upper or a lower path.
Consider assigning an opposite value against the used path to the variable.
For instance, if $c^1_2$ and $c^2_2$ involve in the deadlock at the variable decider (see Fig.~\ref{fig:3-sat-deadlocks}), then assign \false to $x_2$.
This assignment must satisfy the formula because at least one literal in each clause becomes true; otherwise, at least one clause constrainer exists such that the first vertex is empty, i.e., no deadlock.

\medskip
\noindent
\emph{D. All potential cyclic deadlocks are reachable.}
So far, we established the claim that a potential cyclic deadlock exists if and only if the formula is satisfiable.
Next, we claim that all potential cyclic deadlocks are reachable.
According to the above discussion, given a potential cyclic deadlock, the corresponding satisfiable assignment exists.
Consider the execution of Part~B using this assignment, slightly changing Step~2.
In this step, we can choose arbitrary agents for each clause constrainer.
Therefore, it is possible to choose agents involved in the potential cyclic deadlock.
As a result, this deadlock is reachable.
\end{proof}

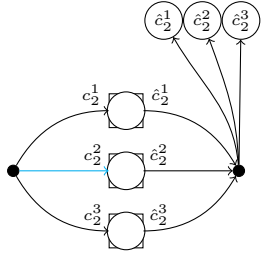
\begin{figure}[tb!]
  \centering
  \begin{tikzpicture}
    \newcommand{\edgesize}{1.6cm}
    \newcommand{\hedgex}{0.75}
    \newcommand{\hedgey}{0.6}
    \tikzset{
      sat-node/.style={vertex, minimum size=0.15cm, fill=black, anchor=center},
      start-node/.style={vertex, minimum size=0.45cm, text width=0.45cm, rectangle},
      goal-node/.style={vertex, minimum size=0.5cm},
      bold-line/.style={very thick},
      line-c3/.style={cyan,->},
    }
    \tiny
    \node[sat-node](v1) at (0, 0) {};
    \node[sat-node](v2) at (3, 0) {};
    \node[start-node](v3) at ($(v1) + (1.5,  0.8)$) {};
    \node[start-node](v4) at ($(v1) + (1.5,  0.0)$) {};
    \node[start-node](v5) at ($(v1) + (1.5, -0.8)$) {};
    \node[goal-node](v6) at (v3) {};
    \node[goal-node](v7) at (v4) {};
    \node[goal-node](v8) at (v5) {};
    \node[anchor=south east] at (v6.west) {$c^1_2$};
    \node[anchor=south west] at (v6.east) {$\hat{c}^1_2$};
    \node[anchor=south east] at (v4.west) {$c^2_2$};
    \node[anchor=south west] at (v4.east) {$\hat{c}^2_2$};
    \node[anchor=south east] at (v5.west) {$c^3_2$};
    \node[anchor=south west] at (v5.east) {$\hat{c}^3_2$};
    \node[goal-node](v9) at ($(v2) + (-1.0, 2.0)$) {$\hat{c}^1_2$};
    \node[goal-node, right=0 cm of v9](v10) {$\hat{c}^2_2$};
    \node[goal-node, right=0 cm of v10](v11) {$\hat{c}^3_2$};
    \draw[line,->] (v1) to[out=60,in=180] (v3);
    \draw[line-c3] (v1) to[out=0,in=180] (v4);
    \draw[line,->] (v1) to[out=-60,in=180] (v5);
    \draw[line,->] (v3) to[out=0,in=120] (v2);
    \draw[line,->] (v4) to[out=0,in=180] (v2);
    \draw[line,->] (v5) to[out=0,in=-120] (v2);
    \draw[line,->] (v2) to[out=95,in=-60] (v9);
    \draw[line,->] (v2) to[out=92,in=-70] (v10);
    \draw[line,->] (v2) -- (v11);
  \end{tikzpicture}
  \caption{Example of \textit{clause constrainer} without multiple edges.
   Used in the proof of Lemma~\ref{lemma:deadlock-np-comp}.
  }
  \label{fig:3-sat-multiple-edges}
\end{figure}

In the proof of Lemma~\ref{lemma:deadlock-np-comp}, we used multiple edges in a gadget \textit{clause constrainer} for the reduction from 3-SAT.
Since OTIMAPP assumes a simple graph (i.e., no multiple edges), we complement how to convert it to a \emph{correct} OTIMAPP instance.
Figure~\ref{fig:3-sat-multiple-edges} shows an example of the clause constrainer for $C^2$.
Recall that a clause constrainer contains all goals for the corresponding clause agents.
In this new gadget, we add intermediate vertices for each edge that potentially leads to cyclic deadlocks.
For each agent $c^j_k$, a new agent $\hat{c}^j_k$ is introduced.
Its start is the intermediate vertex.
Its goal is the original goal of $c^j_k$.
We furthermore change a goal for $c^j_k$ to starts of $\hat{c}^j_k$.
Consider now replacing all old clause constrainers with this new gadgets.
The translation is done in polynomial time.
The rest of the proof is straightforward from Lemma~\ref{lemma:deadlock-np-comp}.

\section{Detecting Potential Cyclic Deadlocks}
\label{sec:detection-details}

{
  \begin{algorithm}[tb!]
    \caption{Potential Cyclic Deadlock Detection}
    \label{algo:detecting-deadlock}
    \textbf{Input}:~a set of paths $\{ \path{1}, \ldots, \path{n} \}$\\
    \textbf{Output}:~one potential cyclic deadlock or \textbf{NONE}\\
    \small
    \begin{algorithmic}[1]
      \newcommand{\cfrom}{\m{\theta_{\mathit{f}}}}
      \newcommand{\cto}{\m{\theta_{\mathit{t}}}}
      \newcommand{\cnew}{\m{\theta}}
      \State $\tablefrom, \tableto \leftarrow \emptyset$
      \Comment table for fragments, key: vertex
      \medskip
      \Procedure{register}{$\theta$}
      \State $i, t_i, j, t_j \leftarrow \theta.\{\text{agents}, \text{indexes}\}\{[1], [-1]\}$
      \State $\tablefrom[\loc{i}{t_i}]$.append($\theta$)
      \State $\tableto[\loc{j}{t_j+1}]$.append($\theta$)
      \EndProcedure
      \medskip
      \Function{isPotentialDeadlock}{$\theta$}
      \State $i, t_i, j, t_j \leftarrow \theta.\{\text{agents}, \text{indexes}\}\{[1], [-1]\}$
      \State \Return $\loc{i}{t_i} = \loc{j}{t_j+1}$
      \EndFunction
      \medskip
      \For{$i = 1\ldots n$}
      \label{algo:deadlock:for-agents}

      \For{$j= 1\ldots |\path{i}|-1$}
      \label{algo:deadlock:for-t}
      \State $u \leftarrow \loc{i}{j}, v \leftarrow \loc{i}{j+1}$
      \medskip
      \State $\cnew \leftarrow \fragment{(i)}{(j)}$
      \label{algo:deadlock:add-own}
      \Comment{single fragment}
      \State \Call{register}{$\cnew$}
      \label{algo:deadlock:add-own:register}
      \medskip
      \For{$\cto \in \tableto[u]$}
      \label{algo:deadlock:tableto}
      \Comment{check fragments on \tableto}
      \IFSINGLE{$i \in \cto.\text{agents}$}{\CONTINUE}
      \State $\cnew \leftarrow \fragment{\cto.\text{agents} + i}{\cto.\text{indexes} + j}$
      \IFSINGLE{\Call{isPotentialDeadlock}{$\theta$}}{\textbf{return} \cnew}
      \label{algo:deadlock:tableto:detecting}
      \State \Call{register}{$\cnew$}
      \EndFor
      \label{algo:deadlock:tableto:end}
      \medskip
      \For{$\cfrom \in \tablefrom[v]$}
      \label{algo:deadlock:tablefrom}
      \Comment{check fragments on \tablefrom}
      \IFSINGLE{$i \in \cfrom.\text{agents}$}{\CONTINUE}
      \State $\cnew \leftarrow \begin{aligned}[t] \{
        \text{agents:}~i + \cfrom.\text{agents},
        \text{indexes:}~j + \cfrom.\text{indexes}
      \}\end{aligned}$
      \IFSINGLE{\Call{isPotentialDeadlock}{$\theta$}}{\textbf{return} \cnew}
      \label{algo:deadlock:tablefrom:detecting}
      \State \Call{register}{$\cnew$}
      \EndFor
      \label{algo:deadlock:tablefrom:end}
      \medskip
      \For{$\cto \in \tableto[u]$, $\cfrom \in \tablefrom[v]$}
      \label{algo:deadlock:connect-two}
      \Comment{connect two fragments}
      \IFSINGLE{$i \in \cto.\text{agents} \cup \cfrom.\text{agents}$}{\CONTINUE}
      \IFSINGLE{$\cto.\text{agents} \cap \cfrom.\text{agents} \neq \emptyset$}{\CONTINUE}
      \State $\cnew \leftarrow \begin{aligned}[t] \{
        &\text{agents:}~\cto.\text{agents} + i + \cfrom.\text{agents}\\
        &\text{indexes:}~\cto.\text{indexes} + j + \cfrom.\text{indexes}
        \}\end{aligned}$
      \IFSINGLE{\Call{isPotentialDeadlock}{$\theta$}}{\textbf{return} \cnew}
      \label{algo:deadlock:connect-two:detecting}
      \State \Call{register}{$\cnew$}
      \EndFor
      \label{algo:deadlock:connect-two:end}
      \EndFor
      \label{algo:deadlock:for-t-end}
      \EndFor
      \label{algo:deadlock:for-agents-end}
      \State \Return \textbf{NONE}
    \end{algorithmic}
  \end{algorithm}
}

Using fragments, Alg.~\ref{algo:detecting-deadlock} detects a potential cyclic deadlock in a set of paths if exists.
The intuition is the following:
(1)~the algorithm checks each path one by one,
(2)~it stores all fragments created so far,
(3)~for each edge in each path, it creates new fragments using existing fragments, and
(4)~if a fragment ends at its start, this is a potential cyclic deadlock.
We describe the details in the proof of the completeness.

\begin{theorem}[completeness]
  Alg.~\ref{algo:detecting-deadlock} finds one potential cyclic deadlock if exists, otherwise returns \textbf{NONE}.
\end{theorem}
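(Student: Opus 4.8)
The plan is to prove correctness in two directions: soundness (anything the algorithm returns really is a potential cyclic deadlock) and completeness (if a potential cyclic deadlock exists, the algorithm finds one). Soundness is the easy direction. Every object the algorithm stores is a fragment in the sense of Definition~\ref{def:fragment}, by construction: a single-edge fragment $\fragment{(i)}{(j)}$ trivially satisfies the chaining condition, and each of the three combination rules (append on \tableto, prepend on \tablefrom, and join-two) concatenates fragments exactly at a shared vertex, so the resulting tuple again satisfies $\loc{\cdot}{t+1}=\loc{\cdot}{t}$ along the chain. The guards ``$i\in\theta.\text{agents}$'' and ``$\cto.\text{agents}\cap\cfrom.\text{agents}\neq\emptyset$'' enforce the no-duplicate-agents requirement of the definition. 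Finally, the algorithm returns $\cnew$ only when \textsc{isPotentialDeadlock}$(\cnew)$ holds, i.e.\ when $\loc{i}{t_i}=\loc{l}{t_l+1}$, which is precisely the condition that a fragment ends at its start; by the remark following Definition~\ref{def:fragment}, such a fragment is a potential cyclic deadlock (Definition~\ref{def:potential-cycle-deadlock}). Hence whatever is returned is genuinely a potential cyclic deadlock.

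The substantive direction is completeness. First I would argue that at the end of processing, the tables contain (a representative of) every maximal-buildable fragment, and in particular every fragment that could ever close into a cycle. The key structural claim is a loop invariant: after the outer loop has processed agents $1,\ldots,i$ and, within agent $i$, edges up to index $j$, the pair $(\tablefrom,\tableto)$ contains every fragment all of whose constituent edges have already been examined by the double loop. I would prove this by induction on the pair $(i,j)$ in lexicographic order. The inductive step considers the current edge $(u,v)=(\loc{i}{j},\loc{i}{j+1})$ of agent $i$: any new fragment whose last edge is this one is formed by appending $i$ to an existing fragment ending at $u$ (handled by the \tableto loop); any new fragment whose first edge is this one is formed by prepending $i$ to an existing fragment starting at $v$ (the \tablefrom loop); and any new fragment having this edge strictly in its interior is obtained by joining a fragment ending at $u$ with a fragment starting at $v$ through $i$ (the connect-two loop). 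Together with the single-edge fragment registered at line~\ref{algo:deadlock:add-own:register}, these exhaust the ways the edge $(u,v)$ can participate in a fragment, so the invariant is preserved.

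Given the invariant, completeness follows: suppose a potential cyclic deadlock $\left((i_0,\ldots,i_{r}),(t_{i_0},\ldots,t_{i_r})\right)$ exists. It is in particular a fragment that additionally closes up. Let $(i^\star,t^\star)$ be the edge in this cycle that is processed last in the lexicographic order of the double loop. At the moment that edge is examined, by the invariant all the other edges of the cycle already lie in fragments stored in the tables, so the relevant sub-chains before and after $(i^\star,t^\star)$ are available in \tableto$[u]$ and \tablefrom$[v]$; the corresponding combination step reconstructs the full closed chain as a candidate $\cnew$, and \textsc{isPotentialDeadlock}$(\cnew)$ returns true, so the algorithm returns a potential cyclic deadlock before or at this point (it may of course terminate earlier having found a different one). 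Conversely, if no cycle-closing fragment is ever formed, none of the \textsc{isPotentialDeadlock} checks succeeds and the algorithm falls through to \textbf{return NONE}; by soundness this is correct.

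The main obstacle I expect is making the completeness invariant airtight, specifically handling the bookkeeping of \emph{which} representative of a cycle gets detected and the fact that fragments are built incrementally rather than all at once. The delicate point is that a single cyclic deadlock can be decomposed into sub-fragments in many ways, and I must ensure that at least one such decomposition is actually assembled by the time its last edge is processed — this is where the three-case split (append / prepend / join) must be shown to be genuinely exhaustive, and where I would need to confirm that no termination criterion causes the relevant intermediate fragments to be discarded rather than registered. A secondary subtlety is that the algorithm stores fragments but returns on the first detected deadlock; I should note that this only affects \emph{which} deadlock is reported, not \emph{whether} one is reported, so it does not compromise the stated guarantee. Termination is immediate from the finiteness of the double loop and the bounded number of fragments considered per edge.
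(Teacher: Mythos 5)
Your proof is correct and takes essentially the same route as the paper's: an induction over the processing order showing that every fragment all of whose edges have already been examined is registered in the two tables, with the same exhaustive case split (single-edge fragment, append to a fragment ending at $u$, prepend to a fragment starting at $v$, join two fragments), so that any potential cyclic deadlock is necessarily assembled and detected when its last edge is processed. Your version is merely finer-grained than the paper's induction on whole paths (lexicographic induction on agent--edge pairs, plus an explicit soundness half that the paper leaves implicit), but the substance is identical.
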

\begin{proof}
  The algorithm uses two tables that store fragments: \tablefrom and \tableto.
  Both tables take one vertex as a key.
  One entry in \tablefrom stores all fragments starting from the vertex.
  One entry in \tableto stores all fragments ending at the vertex.
  A fragment is registered in both tables.
  We now derive the statement by induction on \path{i}.

  \medskip
  \noindent
  \emph{Base case}:
  At the first iteration of the loop [Line~\ref{algo:deadlock:for-agents}--\ref{algo:deadlock:for-agents-end}], all fragments for $\{ \path{1} \}$ are registered on \tablefrom and \tableto due to Line~\ref{algo:deadlock:add-own}--\ref{algo:deadlock:add-own:register}.
  There are no potential cyclic deadlocks for $\{ \path{1} \}$.

  \medskip
  \noindent
  \emph{Induction Hypothesis}:
  Assume that there are no potential cyclic deadlocks for $\{ \path{1}, \ldots, \path{i-1} \}$ and all fragments for them are registered on \tablefrom and \tableto.

  \medskip
  \noindent
  \emph{Induction Step}:
  We now show the property for $i$; otherwise, a potential cyclic deadlock exists for $\{ \path{1}, \ldots, \path{i} \}$ and the algorithm returns it.
  All new fragments about \path{i} are categorized into two:
  (1)~a fragment only with \path{i} or
  (2)~a fragment that extends other fragments on \tablefrom and \tableto, using $(u, v) \in \path{i}$.
  The former is preserved due to Line~\ref{algo:deadlock:add-own}--\ref{algo:deadlock:add-own:register}.
  The latter is further categorized into three cases:
  (a)~a fragment ends at $v$,
  (b)~a fragment starts from $u$, and
  (c)~a fragment connecting two existing fragments that one ends at $u$ and another starts from $v$.
  Each case corresponds to Line~\ref{algo:deadlock:tableto}--\ref{algo:deadlock:tableto:end}, Line~\ref{algo:deadlock:tablefrom}--\ref{algo:deadlock:tablefrom:end}, and Line~\ref{algo:deadlock:connect-two}--\ref{algo:deadlock:connect-two:end}, respectively.
  As a result, all fragments are to register on \tablefrom and \tableto;
  otherwise, a potential cyclic deadlock exists and the algorithm returns it [Line~\ref{algo:deadlock:tableto:detecting}, \ref{algo:deadlock:tablefrom:detecting}, and \ref{algo:deadlock:connect-two:detecting}].
\end{proof}

The time complexity does not contradict the NP-completeness of detecting potential deadlocks (Lemma~\ref{lemma:deadlock-np-comp}).
\begin{proposition}[space and time complexity]
  Algorithm~\ref{algo:detecting-deadlock} requires $\Omega(2^{|n|})$ both for space and time complexity in the worst case.
\end{proposition}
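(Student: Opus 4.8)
The plan is to exhibit a family of inputs, one for each $n$, on which Algorithm~\ref{algo:detecting-deadlock} is forced to create and retain a number of fragments that is exponential in the number of agents. The leverage comes from two facts that are already available: the correctness argument just established shows that \emph{every} fragment of the input is eventually placed in \tablefrom and \tableto (as long as the algorithm does not return early), and the extension/connection loops explicitly iterate over the contents of these tables. Hence, if I can design a deadlock-free set of paths whose number of fragments is $2^{\Omega(n)}$, then the algorithm runs to completion and returns \textbf{NONE}, it must hold all of these fragments in memory simultaneously, and it must spend at least constant work per fragment; this yields the space bound directly and the time bound a fortiori.

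For the construction I would use a \emph{spine} of vertices $w_0, w_1, \dots, w_m$ together with two interchangeable agents $\alpha_k$ and $\beta_k$ for every segment $k \in \{1, \dots, m\}$. Each of $\alpha_k$ and $\beta_k$ is given a path that traverses the edge $(w_{k-1}, w_k)$, padded with private, otherwise-unused start and goal vertices so that the start and goal functions stay injective and the input is a genuine OTIMAPP instance. I would then check that, for any choice vector $(c_1, \dots, c_m)$ with $c_k \in \{\alpha_k, \beta_k\}$, the tuple obtained by placing each $c_k$ at the clock value where it occupies $w_{k-1}$ and wants $w_k$ satisfies the chaining condition of Definition~\ref{def:fragment}: the chosen agent of segment $k$ wants $w_k$, which is precisely the vertex occupied by the chosen agent of segment $k+1$. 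Distinct choice vectors use distinct agent sets, so these are $2^{m}$ pairwise distinct fragments, all starting at $w_0$ and ending at $w_m$. With $n = 2m$ agents this is $2^{\Omega(n)}$ fragments.

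It remains to guarantee that the algorithm does not short-circuit before registering them, i.e. that the instance contains \emph{no} potential cyclic deadlock (the only trigger for an early \textbf{return}). I would argue this from acyclicity: every ``wants'' relation points strictly forward along the spine, from $w_{k-1}$ toward $w_k$, or out to a private goal, and each private start/goal vertex is used by a single agent, so no fragment can end at its own starting vertex. Since $w_0 \neq w_m$, in particular none of the $2^{m}$ spine fragments closes into a cycle. Therefore the algorithm traverses all of its loops, calling \textsc{register} for every fragment, and finally returns \textbf{NONE}.

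Putting the pieces together, the tables simultaneously store $2^{\Omega(n)}$ fragments, so the space used is $2^{\Omega(n)}$; and because the connection loops (\textsc{register}, and the iterations over $\tableto[u]$ and $\tablefrom[v]$) are executed at least once per stored fragment, the running time is at least of the same order, matching the stated $\Omega(2^{|n|})$ bound, i.e. exponential in the number of agents. The delicate point I expect to spend the most care on is not the counting but the justification that the \emph{incremental} construction actually materializes all $2^{m}$ combinations rather than a polynomial-size subset of them; here I would lean on the correctness statement (which certifies that each fragment is produced and stored) together with the acyclicity argument above (which certifies there is no early \textbf{return} to cut the process short). This exponential worst case is fully consistent with the NP-completeness of deadlock detection (Lemma~\ref{lemma:deadlock-np-comp}).
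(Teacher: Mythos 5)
Your proof is sound and substantially more detailed than the paper's, but it uses a different gadget. The paper simply points at Figure~\ref{fig:intractable-example}: all $n$ agents start in one zone, end in another, and must traverse a single sufficiently long corridor $u,\dots,v$ in the same direction. Since every agent visits every corridor vertex, a fragment starting from $u$ is just an ordered tuple of distinct agents placed at consecutive corridor vertices, so there are $\sum_{k\ge 1} n!/(n-k)! \ge n!$ of them, all registered because the one-way corridor admits no cyclic deadlock and hence no early return. Your spine-with-interchangeable-pairs construction replaces ``one shared corridor, any agent at any position'' with ``one dedicated edge per segment, two candidates per position,'' and you supply exactly the two justifications the paper leaves implicit: that acyclicity rules out an early \textbf{return}, and that the correctness induction guarantees every fragment is actually materialized in \tablefrom and \tableto rather than some polynomial subset. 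That explicitness is a genuine improvement over the paper's one-sentence argument.

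The one quantitative gap: with two candidates per segment and $n=2m$ agents you obtain $2^{m}=2^{n/2}$ fragments, which is $2^{\Omega(n)}$ but not literally $\Omega(2^{n})$ as the proposition states (and increasing the number of candidates per segment cannot fix this, since $c^{n/c}$ is maximized near $c=e$ at roughly $1.44^{n}$). To hit the stated bound you need the positions along the chain to be fillable by \emph{all} $n$ agents rather than a constant-size pool, which is exactly what the paper's shared corridor achieves with its $\ge n!$ count. So either adopt the shared-corridor instance, or weaken your final claim to ``exponential in $n$''; everything else in your argument goes through unchanged.
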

\begin{proof}
  Consider an example in Fig.~\ref{fig:intractable-example}.
  In any solutions, the number of fragments starting from $u$ becomes $\Omega\left(2^{|n|}\right)$; this implies the statement.
\end{proof}

Although Alg.~\ref{algo:detecting-deadlock} does not run in polynomial time, it works sufficiently fast in a sparse environment such that not many paths use the same vertices.

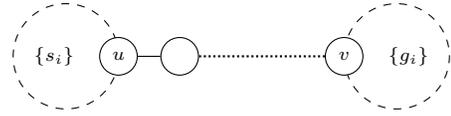
\begin{figure}[tb!]
  \centering
  \begin{tikzpicture}
    \newcommand{\edgesize}{0.2cm}
    \scriptsize
    \coordinate[](loc1) at (0, 0);
    \coordinate[](loc3) at (3.0, 0);
    \draw[line, dashed] (loc1) arc(0:360:0.7);
    \draw[line, dashed] (loc3) arc(0:360:-0.7);
    \node[vertex,fill=white](v1) at (loc1) {$u$};
    \node[vertex, right=0.3cm of v1](v2) {};
    \node[vertex,fill=white](v3) at (loc3) {$v$};
    \draw[line] (v1) -- (v2);
    \draw[line, densely dotted, thick] (v2) -- (v3);
    \node[left=0.5cm of loc1,anchor=east]() {$\{ s_i \}$};
    \node[right=0.5cm of loc3,anchor=west]() {$\{ g_i \}$};
  \end{tikzpicture}
  \caption{
    Example that requires huge space and time to detect potential deadlocks.
    All starts are on the left. All goals are on the right.
    Two zones are connected by a sufficiently long path.
  }
  \label{fig:intractable-example}
\end{figure}

\section{Proof of DBS}
\label{sec:proof:dbs}
\begin{theorem*}[\ref{thrm:dbs}; DBS]
  DBS returns a solution when solutions satisfying Thm.~\ref{thrm:sufficient} exist; otherwise returns \FAILURE.
\end{theorem*}
\begin{proof}
  Assume that there is a solution $\paths = \{ \path{1}, \ldots, \path{N} \}$ satisfying the relaxed sufficient condition (Thm.~\ref{thrm:sufficient}).
  At each cycle [Line~\ref{algo:cp:while}--\ref{algo:cp:while:end}], at least one node in \open is \emph{consistent} with \paths, i.e., its constraints allow searching \paths.
  This is derived by induction:
  (1) the initial node $R$ is consistent with \paths, and
  (2) generated nodes from a consistent node with \paths must include at least one consistent node.
  The search space, i.e., which agents are prohibited using which edges in which directions, is finite.
  Therefore, DBS eventually returns \paths (or another solution); otherwise, such solutions do not exist.
\end{proof}

\section{Stress Test on Random Graphs}
\label{sec:stress-test-random}

{
  \setlength{\tabcolsep}{0pt}
  \newcommand{\colwidth}{0.32\hsize}
  \newcommand{\plotStress}[1]{
    \begin{minipage}[t]{\colwidth}
      \centering
      \IfFileExists{fig/raw/stress-#1.pdf}{\includegraphics[width=1.0\linewidth]{fig/raw/stress-#1.pdf}}{}
    \end{minipage}
  }
  \newcommand{\plotCactus}[1]{
    \begin{minipage}[t]{\colwidth}
      \centering
      \IfFileExists{fig/raw/cactus-#1.pdf}{\includegraphics[width=1.0\linewidth]{fig/raw/cactus-#1.pdf}}{}
    \end{minipage}
  }
  \newcommand{\plotProfiling}[1]{
    \begin{minipage}[t]{\colwidth}
      \centering
      \IfFileExists{fig/raw/profiling-#1.pdf}
      {\includegraphics[width=0.95\linewidth,right]{fig/raw/profiling-#1.pdf}}{}
    \end{minipage}
  }
  \newcommand{\headerER}[2]{
    \begin{minipage}[t]{\colwidth}
      \centering{\scriptsize #2}
    \end{minipage}
  }
  \begin{figure}[tb!]
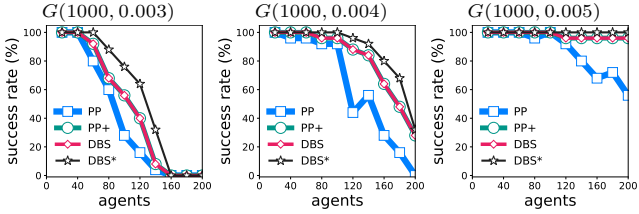

    \centering
    \begin{tabular}[t]{ccc}
      \headerER{ER-3}{$G(1000, 0.003)$}
      & \headerER{ER-4}{$G(1000, 0.004)$}
      & \headerER{ER-5}{$G(1000, 0.005)$}
      \\
      \plotStress{ER-3}
      & \plotStress{ER-4}
      & \plotStress{ER-5}
    \end{tabular}
    \caption{
      Stress test of random graphs.
      The success rate is based on 25 identical instances.
      DBS$^\ast$ includes detected instances that are unsolvable for DBS before timeout, which is not possible for PP$^{(+)}$.
      Results on random graphs $G(n,p)$ are shown, where $n$ is the number of vertices and every possible edge occurs independently with probability $p$.
    }
    \label{fig:stress-test-random}
  \end{figure}
}

Figure~\ref{fig:stress-test-random} summarizes the results.
The experimental setting is the same as Sec.~\ref{sec:stress-test}.
We can see that the difficulty of finding solutions is dominated by average degrees of graphs.

\section{Details of Experimental Setup}
\label{sec:exp-detail}
\subsection{Implementation of DBS}
An initial solution candidate is important for DBS.
It is ideal to find solutions (i.e., a set of paths without deadlocks) from the beginning.
Even if not, it is desired to obtain infeasible solutions with a small number of potential cyclic deadlocks, expected to expand a smaller number of nodes in the high-level search to reach feasible solutions.
We thus made the low-level search for the initial solution take a path having fewer potential cyclic deadlocks with already planned paths, partially using Alg.~\ref{algo:detecting-deadlock}.
This is akin to tie-breaks in low-level search of CBS~\cite{sharon2015conflict}.

\subsection{Setup of MAPF-DP}
We carefully designed experiments to be fair as follows.

\paragraph{Preliminaries}
MAPF-DP~\cite{ma2017multi} emulates the imperfect execution of MAPF plans by introducing the possibility of unsuccessful moves, but still agents have to avoid collisions.
Time is discrete.
At each timestep, an agent $i$ can either stay in place or move to an adjacent vertex with a probability $p_i$ of being unsuccessful.
Solution quality is assessed by the total traveling time, where the time is the earliest time step that one agent reaches its goal and remains there.

\paragraph{From OTIMAPP to MAPF-DP}
To adapt the execution of OTIMAPP to MAPF-DP, we introduce two changes for executions:
(1)~using \emph{mode}s to represent a state on edges, and,
(2)~an activation rule to represent the failure of movements.

\medskip
\noindent
\emph{Mode}:
In reality, an agent $i$ occupies two vertices simultaneously during a move from one vertex to another vertex.
We introduce two \emph{modes} in the execution of OTIMAPP to represent this state;
\begin{itemize}
  \setlength{\itemsep}{0pt}
  \item A mode \contracted corresponds to when the agent $i$ occupies one vertex.
  \item A mode \extended corresponds to when the agent $i$ occupies two vertices.
\end{itemize}
\noindent
Agents move towards their goals by changing two modes alternately.
Initially, they are in \contracted.
The names are from~\cite{okumura2021time}.

\medskip
\noindent
\emph{Activation Rule}:
We repeated the following two phases:
\begin{itemize}
  \setlength{\itemsep}{0pt}
\item Each agent $i$ in \extended is activated with probability $1-p_i$.
  As a result, the agent $i$ successfully moves to the adjacent vertex with probability $1-p_i$ and becomes \contracted.
\item Choose one agent in \contracted randomly then makes it activated.
  Repeat this until the configuration becomes stable, i.e, all agents in \contracted do not change their states unless any agent in \extended is activated.
\end{itemize}
\noindent
A pair of the two phases is regarded as one timestep.

\paragraph{Other Experimental Setup}
The delay probabilities $p_i$ were chosen uniformly at random from $[0, \bar{p}]$, where $\bar{p}$ is the upper bound of $p_i$.
The higher $\bar{p}$ means that agents delay often.
$\bar{p}=0$ corresponds to perfect executions without delays.
Implementations of the online time-independent planning, called Causal-PIBT, were obtained from the authors~\cite{okumura2021time}.
The offline MAPF plans for MCPs~\cite{ma2017multi} was obtained by ECBS~\cite{barer2014suboptimal}, a bounded sub-optimal solver for MAPF.
The sub-optimally was set to $1.1$, which was adjusted to solve all instances in the experiment.
The implementation of ECBS was obtained from~\citeAppendix{okumura2021iterative} (in the additional references).

\subsection{Setup of Robot Demonstrations}
\subsubsection{Centralized Execution}
\paragraph{Platform}
We used the \emph{toio} robots (\url{https://toio.io/}).
The toio robots, connected to a computer via BLE~(Bluetooth Low Energy), evolve on a specific playmat and are controllable by instructions of absolute coordinates.
We informally confirmed that there is a non-negligible action delay between robots when sending instructions to several robots simultaneously (e.g., 10 robots, see the movie).
Therefore, one-shot execution --- robots move alone without communication after the receipt of plans --- will result in collisions hence failure of the execution in a high possibility.
The robots need some kinds of execution policies.

\paragraph{Usage}
We created a virtual grid on the playmat and the robots followed the grid.
A central server (a laptop) managed the locations of all robots and issued the instructions (i.e., where to go) to each robot step by step.
The instructions were issued \emph{asynchrony} between robots while avoiding collisions.
The code was written in Node.js.

\subsubsection{Decentralized Execution}
\paragraph{Platform}
We used the \emph{AFADA} platform~\cite{kameyama2021active};
an architecture that consists of mobile robots that evolve over an active environment made of flat \emph{cells} each equipped with a computing unit (Fig.~\ref{fig:afada-description}).
Adjacent cells can communicate with each other via a serial interface.
Cells form the environment in two ways:
as a two-dimensional physical grid, and, as a communication network.
In addition, a cell can communicate with robots on it via NFC (Near Field Communication).

{
  \begin{figure}[t]
    \centering
    \includegraphics[width=0.5\hsize]{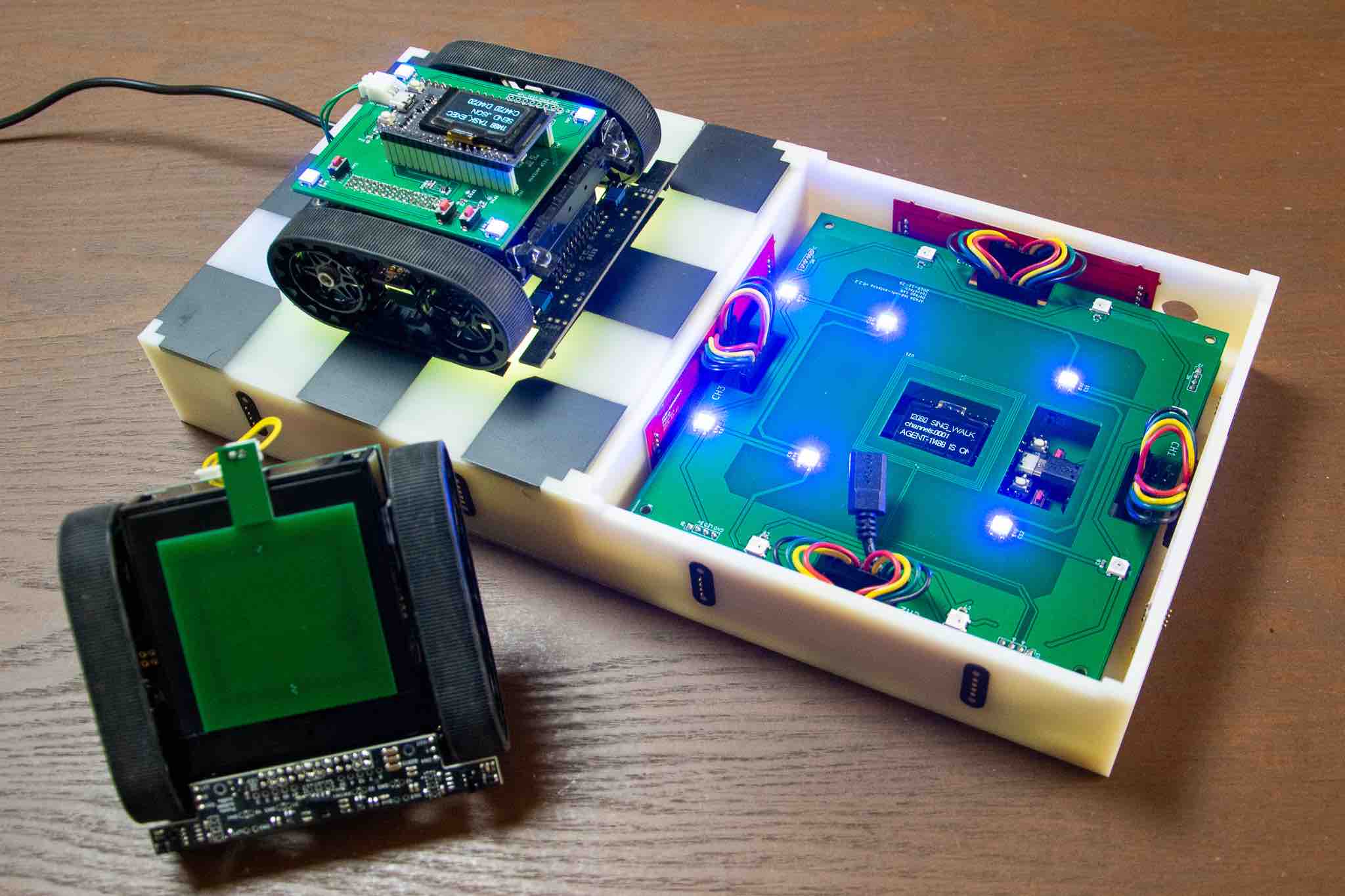}
    \caption{
      Components of AFADA.
      The system consists of two kinds of actors: cells and robots.
      A cell is covered by an acrylic roof patterned for line tracing (left cell), enabling robots to move following the grid structure.
    }
    \label{fig:afada-description}
  \end{figure}
}

\paragraph{Usage}
Robots first receive the OTIMAPP solution from a laptop via Wi-Fi, then move following the plan.
Cells achieve mutual exclusion of locations for robots, i.e., collision avoidance, using local communication as follows.
Before moving to the next vertex (i.e., cell; denoted as \vnext), a robot first asks the underlying cell \vcurrent the availability of \vnext.
Then, \vcurrent asks \vnext its status.
If \vnext is \emph{reserved} by another robot, \vcurrent waits a while and asks the status of \vnext again; otherwise, \vcurrent makes \vnext reserved and notifies the robot to move to \vnext.
When the robot reaches \vnext, then the robot \emph{releases} \vcurrent via \vnext.
Importantly, there is \emph{no central control} at runtime.
Any actor (robots, cells, and the laptop that sends the plan) has no methods to know the entire configuration.
This also means that the system is fully asynchronous as for timing.
Furthermore, there is no global communication; robots and cells decide their actions based on information from nearby actors.

\bibliographystyleAppendix{named}
\bibliographyAppendix{ref}

\end{document}